  \let\oldparagraph\paragraph
  \renewcommand{\paragraph}{
    \@ifstar
      \xxxParagraphStar
      \xxxParagraphNoStar
  }
  \newcommand{\xxxParagraphStar}[1]{\oldparagraph*{#1}\mbox{}}
  \newcommand{\xxxParagraphNoStar}[1]{\oldparagraph{#1}\mbox{}}
  \let\oldsubparagraph\subparagraph
  \renewcommand{\subparagraph}{
    \@ifstar
      \xxxSubParagraphStar
      \xxxSubParagraphNoStar
  }
  \newcommand{\xxxSubParagraphStar}[1]{\oldsubparagraph*{#1}\mbox{}}
  \newcommand{\xxxSubParagraphNoStar}[1]{\oldsubparagraph{#1}\mbox{}}
\patchcmd\longtable{\par}{\if@noskipsec\mbox{}\fi\par}{}{}
\def\maxwidth{\ifdim\Gin@nat@width>\linewidth\linewidth\else\Gin@nat@width\fi}
\def\maxheight{\ifdim\Gin@nat@height>\textheight\textheight\else\Gin@nat@height\fi}
\def\fps@figure{htbp}
  \renewcommand*\contentsname{Table of contents}
  \newcommand\contentsname{Table of contents}
  \renewcommand*\listfigurename{List of Figures}
  \newcommand\listfigurename{List of Figures}
  \renewcommand*\listtablename{List of Tables}
  \newcommand\listtablename{List of Tables}
  \renewcommand*\figurename{Figure}
  \newcommand\figurename{Figure}
  \renewcommand*\tablename{Table}
  \newcommand\tablename{Table}
\newcommand{\anon}{1}
\begin{document}

\def\spacingset#1{\renewcommand{\baselinestretch}%
{#1}\small\normalsize} \spacingset{1}

\newtheorem{theorem}{Theorem}
\newtheorem{proposition}{Proposition}
\newtheorem{corollary}{Corollary}
\newtheorem{assumption}{Assumption}
\newtheorem{remark}{Remark}
\renewcommand{\proofname}{\bf{Proof}}

\theoremstyle{definition}

\renewcommand\P{\mathbb{P}}
\newcommand\E{\mathbb{E}}
\renewcommand\H{\mathcal{H}}
\newcommand\Z{\mathbb{Z}}
\renewcommand\L{\mathcal{L}}
\renewcommand\S{\mathcal{S}}
\newcommand\N{\mathcal{N}}
\newcommand\G{\mathcal{G}}
\newcommand\C{\mathcal{C}}
\newcommand\D{\mathcal{D}}
\newcommand\cE{\mathcal{E}}
\newcommand\F{\mathcal{F}}
\newcommand\bN{\mathbb{N}}
\newcommand\R{\mathbb{R}}
\newcommand\A{\mathcal{A}}
\newcommand\B{\mathcal{B}}
\renewcommand\O{\mathcal{O}}
\newcommand\K{\mathcal{K}}
\newcommand\cT{\mathcal{T}}
\newcommand\I{\mathcal{I}}
\newcommand\cP{\mathcal{P}}
\newcommand\cR{\mathcal{R}}
\newcommand\Q{\mathcal{Q}}
\newcommand\M{\mathcal{M}}

\newcommand\bnu{\boldsymbol{\nu}}
\newcommand\hnu{\widehat{\boldsymbol{\nu}}}
\newcommand\bw{\boldsymbol{\omega}}
\newcommand\brho{\boldsymbol{\rho}}
\newcommand\hrho{\widehat{\boldsymbol{\rho}}}
\newcommand\Dd{\mathcal{D}^{\dagger}}
\newcommand\hD{\widehat{\mathcal{D}}}
\newcommand\hC{\widehat{\mathcal{C}}}
\newcommand\hE{\widehat{\mathcal{E}}}
\newcommand\bPhi{\boldsymbol{\Phi}}
\newcommand\bzeta{\boldsymbol{\zeta}}
\newcommand\tbPhi{\widetilde{\boldsymbol{\Phi}}}
\newcommand\indicator{\mathbbm{1}}
\newcommand\beps{\boldsymbol{\varepsilon}}
\newcommand\by{\boldsymbol{y}}
\newcommand\bz{\boldsymbol{z}}
\newcommand\bh{\boldsymbol{h}}

\newcommand\tr{{\rm tr}}
\def\T{\mathrm{\scriptscriptstyle T}}

\newcommand\smallo{
  \mathchoice
    {{\scriptstyle\O}}
    {{\scriptstyle\O}}
    {{\scriptscriptstyle\O}}
    {\scalebox{.7}{$\scriptscriptstyle\O$}}
}

\allowdisplaybreaks
\setlength{\parindent}{2em} 


\if1\anon
{
  \title{\bf Asymptotic Theory for Regularized Estimation in Functional Time Series Models}
  \author[1]{Ying Niu\thanks{These authors contributed equally to this work}} 
  \author[2]{Yuwei Zhao$^*$}
  \author[3]{Zhao Chen\thanks{Corresponding author: zchen\_fdu@fudan.edu.cn}} 
  \author[4]{Christina Dan Wang\thanks{Corresponding author: christina.wang@nyu.edu}}
  \affil[1]{\small Shanghai Center for Mathematical Sciences, Fudan University}
  \affil[2]{\small Department of Financial and Actuarial Mathematics, Xi'an Jiaotong-Liverpool University}
  \affil[3]{\small School of Data Science, Fudan University}
  \affil[4]{\small Business Division, New York University Shanghai}
  \maketitle
} \fi

\if0\anon
{
  \bigskip
  \bigskip
  \bigskip
  \begin{center}
    {\LARGE\bf Asymptotic Theory for Regularized Estimation in Functional Time Series Models}
\end{center}
  \medskip
} \fi

\bigskip
\begin{abstract}
Functional autoregressive (FAR) models provide a fundamental framework for analyzing temporally dependent functional data. However, the infinite-dimensional nature of the underlying Hilbert space introduces intrinsic ill-posedness, as the autocovariance operators are compact and lack bounded inverses. This paper develops a new theoretical framework for the regularized estimation and asymptotic analysis of FAR models. Leveraging Hilbert space theory, we rigorously characterize the distinction between finite- and infinite-dimensional time series analysis and formalize the necessity of regularization. To stabilize the estimation of autoregressive operators, we introduce a Tikhonov regularization scheme and derive Yule–Walker–type estimators in a general Hilbert space, and further specialize to the $L^2$ space for explicit forms. Within this unified framework, we establish the consistency and asymptotic normality of the regularized estimators and reveal that asymptotic normality can be achieved only for the predictors rather than the operator estimates themselves. Furthermore, we derive the mean squared prediction error (MSPE) and decompose its bias–variance structure. A comprehensive simulation study and an application to high-frequency functional data from wearable devices demonstrate the practical validity of the theory and the ability of FAR models to capture dynamic functional patterns.
\end{abstract}

\noindent%
{\it Keywords:} Yule-Walker estimation, Hilbert space, Eigenanalysis, Asymptotic normality, Tikhonov regularization
\vfill

\newpage
\spacingset{1.8} 

\section{Introduction}
\label{sec: introduction}

The proliferation of data collected from continuous-time processes has brought functional data analysis (FDA) to the forefront of modern statistics (see \citealp{ramsay2005functional, hsing2015theoretical}). In many applications, functional observations often exhibit strong temporal dependence rather than independence; common examples include segments of electroencephalogram (EEG) signals \citep{hasenstab2017multi}, daily pollution concentration profiles \citep{aue2015prediction}, and daily traffic flow curves \citep{ma2024network}. To capture such dependencies, the framework of functional time series (FTS) extends classical time series analysis into infinite-dimensional spaces. A seminal work in this direction is \citet{bosq2000linear}, which formalized the functional autoregressive (FAR) model within a Hilbert space. This model generalizes the classical scalar autoregressive (AR) and vector autoregressive (VAR) models by replacing the multivariate observation vectors with random elements in a Hilbert space, and the coefficient matrices with bounded linear operators. For theoretical foundations of the FAR model, refer to \citet{HörmannSiegfried2010WDFD, horvath2014testing, kowal2019functional}. Meanwhile, surrogate approaches in \citet{aue2015prediction, chang2024modeling} leveraged functional principal component analysis (FPCA) as a dimension reduction tool to approximate the FAR model.

Despite their versatility, FAR models pose fundamental theoretical challenges that have no counterparts in finite-dimensional time series analysis. In Hilbert spaces, the autocovariance operators are compact, implying that their eigenvalues converge to zero. Consequently, their inverse operators are unbounded, rendering direct analogs of the Yule–Walker equations ill-posed. This inverse problem in operator theory \citep{kirsch2011introduction} destabilizes estimation and invalidates conventional asymptotic arguments, reflecting the intrinsic link between infinite-dimensionality and ill-posedness that constrains the development of a coherent asymptotic framework.

Most existing methods are formulated in the $L^2$ space of square-integrable functions (see \citealp{HörmannSiegfried2010WDFD, kowal2019functional}). While the $L^2$ setting is adequate for many applications, it becomes restrictive when inferential targets involve derivatives of functional trajectories \citep{mas2009functional} or when the data are inherently multivariate, such as tri-axial acceleration signals from wearable sensors in Section \ref{sec: real data}. In these cases, more general Hilbert spaces—such as Sobolev spaces or vector-valued function spaces—provide natural environments that capture smoothness and multivariate structure simultaneously. 

Furthermore, the prevailing estimation strategies rely heavily on spectral truncation \citep{mas2007weak, chen2022functional} to bypass the unbounded inverses of autocovariance operators. However, truncation introduces a delicate bias–variance trade-off: using too few eigenpairs yields large bias, while using too many amplifies variance through the inverses of small eigenvalues. As a result, existing approaches lack both numerical stability and rigorous asymptotic characterization. Besides this issue, most theoretical developments are confined to the first-order FAR(1) model \citep{antoniadis2006functional, kargin2008curve}, leaving the estimation method and asymptotic theory of high-order FAR models largely underdeveloped, which are essential for capturing complex dynamics \citep{kokoszka2013determining}.

To address these challenges, this paper develops a unified theoretical and methodological framework for regularized estimation and asymptotic analysis of functional autoregressive models in general Hilbert spaces. Rather than serving merely as an abstract extension, the Hilbert space formulation provides a rigorous foundation that clarifies the essential difference between finite- and infinite-dimensional time series analysis. In finite dimensions, autocovariance matrices are nonsingular and standard estimation techniques are well-posed. In contrast, in infinite-dimensional settings, compactness of the autocovariance operator implies intrinsic ill-posedness, necessitating regularization.

Recognizing this structural distinction, we introduce Tikhonov regularization as a principled and theoretically grounded solution. By applying a smooth penalization to counteract the instability arising from inverting small eigenvalues, Tikhonov regularization stabilizes estimation of autoregressive operators while preserving their functional structure. This regularized framework not only resolves the ill-posedness problem but also enables a tractable asymptotic theory for dependent functional data.

Our study makes three major contributions to the theory of functional time series. First, we establish a general theoretical formulation for FAR models in arbitrary separable Hilbert spaces. This framework accommodates a wide range of functional data types—including smooth curves, derivative processes, and multivariate trajectories—and elucidates why regularization is indispensable for inference in infinite-dimensional settings. The regularized inverse operator provides both theoretical tractability and numerical stability.

Second, building upon this foundation, we propose a Tikhonov-regularized Yule–Walker estimator for the autoregressive operator and establish asymptotic normality of the resulting predictor under some mild conditions. To our knowledge, this represents the first asymptotic normality result for a regularized estimator in a dependent functional setting—a significant theoretical advance bridging functional time series and inverse problem theory.

Third, we provide a bias–variance decomposition of the mean squared prediction error (MSPE), separating the effects of regularization bias, truncation bias, and estimation variance. This decomposition yields interpretable insights into the role of the regularization parameter and demonstrates the superior stability of the proposed method relative to conventional truncation-based estimators.

Complementing the theory, we conduct extensive simulation studies and apply the proposed method to multivariate functional data collected from wearable sensors. The empirical analysis confirms the theoretical properties and illustrates the model’s capacity to capture fine-grained temporal dynamics. A novel visualization scheme is further proposed to display estimated operators in three dimensions, facilitating intuitive interpretation of functional dependencies.

The remainder of the paper is organized as follows. Section~\ref{sec: preliminaries} introduces the FAR($p$) model in Hilbert space and derives the Yule–Walker equations. Section~\ref{sec: Yule-Walker estimators} presents identifiability conditions and constructs the regularized estimators, with Section~\ref{sec: example} illustrating the framework in $L^2$ spaces. Asymptotic properties, including consistency, asymptotic normality, and MSPE, are developed in Section~\ref{sec: asymptotic results}. Section~\ref{sec: simulation} reports simulation results, and Section~\ref{sec: real data} applies the proposed methodology to wearable sensor data. Section~\ref{sec: conclusion} concludes with remarks and future directions. Proofs are provided in the supplementary material.

\section{Preliminaries}
\label{sec: preliminaries}

\subsection{Notations}
\label{subsec: notations}

The notations used throughout this paper are summarized below. Let $\Z$, $\bN^+$, and $\R$ denote the sets of integers, positive integers, and real numbers, respectively. For two positive sequences $\{a_n\}, n\in\bN^+$ and $\{b_n\}, n\in\bN^+$, we write $a_n = \smallo(b_n)$ if $a_n/b_n\rightarrow 0$ as $n\rightarrow\infty$, and $a_n = \O(b_n)$ if there exists a constant $c>0$ such that $a_n\leq cb_n$ for all $n$. 

Let $\H$ and $\H^{\prime}$ be two separable Hilbert spaces with inner products $\langle \cdot, \cdot \rangle_{\H}$ and $\langle \cdot, \cdot \rangle_{\H^{\prime}}$, and the induced norms $\| \cdot \|_{\H} = \langle \cdot, \cdot \rangle_{\H}^{1/2}$ and $\| \cdot \|_{\H^{\prime}} = \langle \cdot, \cdot \rangle_{\H^{\prime}}^{1/2}$, respectively.  For simplicity, the subscripts $\H$ and $\H^{\prime}$ will be omitted unless their inclusion is necessary to avoid confusion. To streamline the discussion, we assume that all inner products in this paper yield real numbers. Suppose $\rho$ is a linear operator from $\H$ to $\H^{\prime}$ with adjoint operator $\rho^*$. The operator norm, Hilbert-Schmidt norm, and nuclear norm of $\rho$ are defined as $\| \rho \|_{\L} := \sup_{h\in\H, \|h\|_\H \leq 1} \| \rho h\|_{\H^{\prime}}$, $\| \rho \|_{\S} := \left( \sum_{i, j} \langle \rho \boldsymbol{e}_i, \boldsymbol{e}_j^{\prime} \rangle^2_{\H^{\prime}} \right)^{1/2}$, and $\| \rho \|_{\N} := \sum_i \langle (\rho^* \rho)^{1/2} \boldsymbol{e}_i, \boldsymbol{e}_i\rangle_{\H}$, respectively, for some complete orthonormal systems $\{\boldsymbol{e}_i\}, i\in\bN^+$ of $\H$ and $\{\boldsymbol{e}_j^{\prime}\}, j\in\bN^+$ of $\H^{\prime}$. The operator $\rho$ is said to be bounded if its operator norm is finite. Denote $0_{\H}$, $\mathscr{O}_{\H}$ and $\I_{\H}$ as \enquote{zero} in $\H$, the zero operator and the identity operator on $\H$, respectively.

Suppose that the random elements are defined on a common probability space $(\Omega, \F, \P)$. The notations $\overset{\P}{\rightarrow}$ and $\overset{d}{\rightarrow}$ represent \emph{convergence in probability} and \emph{convergence in distribution}, respectively. An $\H$-valued random variable is a measurable function from $\Omega$ to $\H$, where $\H$ is a Hilbert space and equipped with its Borel $\sigma$-algebra. 

Let $\{Z_n\}, n\in\Z$ be a sequence of $\H$-valued random elements and $\{u_n\}, n\in\Z$ be a sequence of real-valued random variables. We write $Z_n=\smallo_{\P}(u_n)$ if $Z_n/u_n\overset{\P}{\rightarrow}0_{\H}$. Denote $\H^p \ (p \geq 2)$ as the $p$-fold Cartesian product of $\H$. $\H^p$ is thereby a separable Hilbert space with the inner product $\langle \bh, \widetilde{\bh} \rangle = \sum_{i=1}^p \langle h_i, \widetilde{h}_i \rangle$ for $\bh = (h_1, \ldots, h_p)^\T$ and $\widetilde{\bh} = (\widetilde{h}_1, \ldots, \widetilde{h}_p)^\T$ in $\H^p$. For $h\in\H$ and $h^{\prime}\in\H^{\prime}$, the
notation $h \otimes h^{\prime}$ is defined as a linear and bounded operator from $\H$ to $\H^{\prime}$ such that $h \otimes h^{\prime} = \langle h,\cdot \rangle_{\H} h^{\prime}$.

\subsection{Functional Autoregressive Model in Hilbert Space}
\label{subsec: ARH model}

Consider a stationary process $\boldsymbol{X}=\left(X_t \right), t \in \Z$ with each $X_t \in \H$. Assume that $\E[X_t]=0_{\H}$ for all $t \in \Z$, where the expectation is understood in the Bochner-integral sense \citep{hsing2015theoretical}. The \emph{functional autoregressive model of order $p$ (FAR($p$)) in Hilbert space} is defined by (see \citealt{bosq2000linear}),
\begin{equation}
    \label{eq: ARH}
    X_t=\rho_1 X_{t-1} + \cdots + \rho_p X_{t-p} +\varepsilon_t, \ t \in \Z,
\end{equation}
where $\rho_1, \ldots, \rho_p$ are linear and bounded operators on $\H$ with $\rho_p \neq \mathscr{O}_{\H}$, referred to as \emph{autoregressive operators}, and $\boldsymbol{\varepsilon}=\left(\varepsilon_t\right), t \in \Z$ is a zero-mean, independent and identically distributed (i.i.d.) innovation sequence in $\H$.

The FAR($p$) model extends the classical autoregressive (AR) framework \citep{brockwell1991time} and its vector-valued counterpart (VAR) \citep{stock2001vector} to an infinite-dimensional Hilbert space. This extension provides a flexible and rigorous framework for modeling and analyzing stochastic processes whose realizations are functions.

By incorporating higher-order dependencies, the FAR($p$) model enhances flexibility of FAR(1) to capture the full dynamics in functional time series. However, unlike the inherently Markovian FAR(1), the FAR($p$) model requires a transformation to achieve Markov properties, known as the \emph{Markov representation}. This transformation is achieved by constructing $Y_t := (X_t, \ldots, X_{t-p+1})^\T \in \H^p$, then the resulting process $\boldsymbol{Y} := (Y_t), t \in \Z$ is stationary, and the orignal FAR($p$) process $\boldsymbol{X}$ in the Hilbert space $\H$ is equivalent to the FAR(1) process $\boldsymbol{Y}$ in the product space $\H^p$ by
\begin{align}
    \label{eq: ARHp}
    \text{Equation }(\ref{eq: ARH}) &\Longleftrightarrow Y_t = \bPhi Y_{t-1} +  \zeta_t  \\
    \label{eq: ARHp_matrix}
        :&\Longleftrightarrow
    \begin{pmatrix}
        X_t\\
        X_{t-1}\\
        \vdots\\
        X_{t-p+1}
    \end{pmatrix} =
    \begin{pmatrix}
        \rho_1 & \cdots & \cdots & \cdots & \rho_p\\
        \I_{\H} & \mathscr{O}_{\H} & \cdots & \cdots & \mathscr{O}_{\H} \\
        \vdots & \ddots & \ddots & \ddots & \vdots\\
        \mathscr{O}_{\H} & \cdots & \mathscr{O}_{\H} & \I_{\H} & \mathscr{O}_{\H}
    \end{pmatrix}
    \begin{pmatrix}
        X_{t-1}\\
        X_{t-2}\\
        \vdots\\
        X_{t-p}
    \end{pmatrix} +
    \begin{pmatrix}
        \varepsilon_{t}\\
        0_{\H}\\
        \vdots\\
        0_{\H}
    \end{pmatrix},
\end{align}
where $\bPhi$ is a linear and bounded operator on $\H^p$ and $\zeta_t = (\varepsilon_{t}, 0_\H, \ldots, 0_\H)^\T\in \H^p$ represents the augmented innovation sequence. 

\begin{remark}
    Equation~(\ref{eq: ARHp_matrix}) specifies $\bPhi$ as an operator-valued matrix acting on $\H^p$, where the operation mimics standard matrix multiplication with scalar multiplication replaced by operator action. We adopt this notation hereafter without elaboration.
\end{remark}

\subsection{The Yule-Walker Equations for FAR($p$) Model}
\label{subsec: Yule-Walker equations}

The identification and prediction of the FAR($p$) model~(\ref{eq: ARH}) rely on estimating both the autoregressive operators $\rho_1, \ldots, \rho_p$ and the autocovariance operator of the innovation process $\boldsymbol{\varepsilon}$. To this end, we employ the Yule–Walker equations \citep{brockwell1991time, dou2016generalized, zadrozny2016extended}, which utilize the second-order moment information to provide an estimation procedure. 

We derive the Yule–Walker equations through the Markov representation~(\ref{eq: ARHp_matrix}) of the FAR($p$) model in $\H^p$, leveraging properties of operators induced operation $\otimes$; see Appendix A in the supplementary material for details. Pre-multiplying both sides of Equation~(\ref{eq: ARHp}) by $Y_k, k=t,t-1$ in $\otimes$ sense and subsequently taking expectations, we obtain
\begin{equation}
    \label{eq: expectation of tensor}
    \E [Y_k \otimes Y_t] = \bPhi \E [Y_k \otimes Y_{t-1}] + \E [Y_k \otimes \zeta_t], \ k=t,t-1.
\end{equation}

\begin{remark}
    The expectation $\E [Y_k \otimes Y_t]$ in Equation~(\ref{eq: expectation of tensor}) is defined as a bounded linear operator on $\H^p$ satisfying $(\E [Y_k \otimes Y_t])\bh = \E [\langle Y_k, \bh \rangle Y_t]$ for all $\bh \in \H^p$, where $\E [\langle Y_k, \bh \rangle Y_t]$ denotes the Bochner expectation of the $\H^p$-valued random variable $\langle Y_k, \bh \rangle Y_t$.
\end{remark}

To relate the operator $\bPhi$ on $\H^p$, defined in Equation~(\ref{eq: ARHp_matrix}), to the autoregressive operators $\rho_1, \ldots, \rho_p$ on $\H$, we introduce projection operators $\pi_j: \H^p \to \H$ with $\pi_j(\bh) = h_j$ for $j = 1, \ldots, p$, along with their adjoint operators $\pi_j^*: \H \to \H^p$ given by $\pi_j^* (h_j) = (0_{\H}, \ldots, h_j, \ldots, 0_{\H})^\T$, where $h_j$ occupies the $j$-th coordinate. We then pre-multiply both sides of Equation~(\ref{eq: expectation of tensor}) by $\pi_1$ for $k = t-1, t$, and further post-multiply by $\pi_1^*$ when $k=t$, yielding 
\begin{align*}
    \E [Y_{t-1} \otimes X_t] &= (\rho_1, \ldots, \rho_p) \E[Y_{t-1} \otimes Y_{t-1}], \\
    \E [X_t \otimes X_t] &= (\rho_1, \ldots, \rho_p) \E[X_t \otimes Y_{t-1}] + \E[X_t \otimes \varepsilon_t].
\end{align*}
Let $\brho := (\rho_1, \ldots, \rho_p)$ denote the unknown parameter, $\D := \E[Y_t \otimes Y_t]$ the autocovariance operator of $\boldsymbol{Y}$, $\cE := \E [Y_t \otimes X_{t+1}]$ the lag-1 cross-covariance operator of $\boldsymbol{Y}$ and $\boldsymbol{X}$, $\C_h := \E[X_0 \otimes X_h]$ the lag-$h$ autocovariance operator of $\boldsymbol{X}$, and $\C_{\boldsymbol{\varepsilon}} := \E[\varepsilon_0 \otimes \varepsilon_0]$ the autocovariance operator of $\boldsymbol{\varepsilon}$. Due to the stationarity, $\D$ and $\cE$ are independent of $t$. With these definitions, the Yule-Walker equations can be expressed in the following concise forms:
\begin{align}
    \label{eq: Y-W-1}
    \cE &= \brho\D,\\
    \label{eq: Y-W-2}
    \C_0 &= \brho\cE^* + \C_{\boldsymbol{\varepsilon}}.
\end{align}

The operator $\D$ and $\cE$ exhibit matrix representations with operator-valued entries:
\begin{align}
    \label{eq: D_E form}
    \D = 
    \begin{pmatrix}
        \C_0 & \C_1 & \cdots & \C_{p-1}\\
        \C_1^* & \C_0  & \cdots & C_{p-2} \\
        \vdots & \vdots & \ddots & \vdots\\
        \C_{p-1}^* & \C_{p-2}^* & \cdots & \C_0
    \end{pmatrix}, \quad
    \cE = (\C_1, \ldots, \C_p).
\end{align}
This structured representation offers computational advantages, as the repeating pattern of autocovariance operators $\C_0, \ldots, \C_{p-1}$ in $\D$ and $\cE$ simplifies numerical implementation. The Yule-Walker equations reveal the linear relationships among these autocovariance operators up to the $p$-th lag. Derived from the second-order moments, these equations serve as a cornerstone for estimating the parameters of the FAR($p$) model.

\section{The Yule-Walker Estimation}
\label{sec: Yule-Walker estimators}

\subsection{Identifiability and Regularization}
\label{subsec: identifiability}

For statistical inference and asymptotic theory based on Equation~(\ref{eq: Y-W-1}), the eigenanalysis of the autocovariance operator $\D$ is essential. 
Under the condition $\E \|Y\|^2 < \infty$, the autocovariance operator $\D$ is self-adjoint, positive semi-definite, and compact on $\H^p$.
Therefore, $\D$ admits the spectral decomposition \citep{hsing2015theoretical}
\begin{equation}
    \label{eq: spectral decomposition}
    \D = \sum_{j=1}^{\infty} \lambda_j \bnu_j \otimes \bnu_j,
\end{equation}
where $\lambda_j, j \in \bN^+$ is a non-increasing sequence of eigenvalues and $\bnu_j, j \in \bN^+$ are the corresponding orthonormal eigenvectors in $\H^p$. 

Identifiability is a fundamental concept in parameter estimation, ensuring that distinct parameter values correspond to distinct model structures. We first characterize when the autoregressive operator $\brho$ is uniquely determined throughout the entire space $\H^p$. The proofs and supplementary results of this section can be found in Appendix B.

\begin{proposition}
    \label{prop: identifiability on Hp}
    The following statements are equivalent: \\
    (i) (Global identifiability) If $\brho\D = \widetilde{\brho}\D$ for some linear and bounded operator $\widetilde{\brho}$, then $\brho = \widetilde{\brho}$. \\
    (ii) (Injectivity) The null space $\ker\D:=\{\by \in \H^p: \D \by=0_{\H^p}\} = \{0_{\H^p}\}$. \\
    (iii) (Positive eigenvalues) All eigenvalues $\lambda_j$ of $\D$ are positive.
\end{proposition}

Proposition~\ref{prop: identifiability on Hp}(ii) indicates that global identifiability on $\H^p$ requires the image space ${\rm Im} \D := \{\D \by: \by \in \H^p\}$ to be sufficiently rich to distinguish between different $\brho$. However, identifiability over the entire space $\H^p$ is not necessary. The key insight is that $\brho$ only needs to be identified on ${\rm Im} \D$, as constrained by Equation~(\ref{eq: Y-W-1}).

\begin{proposition}[Identifiability on $(\ker\D)^{\perp}$]
    \label{prop: identifiability on orthocomplement}
    If $\brho\D = \widetilde{\brho}\D$ for some linear and bounded operator $\widetilde{\brho}$, then $\brho|_{(\ker\D)^{\perp}} = \widetilde{\brho}|_{(\ker\D)^{\perp}}$.
\end{proposition}

Proposition~\ref{prop: identifiability on orthocomplement} demonstrates that $\brho$ is identifiable on $(\ker\D)^{\perp}$, regardless of whether $\D$ is injective. The space $(\ker\D)^{\perp}$ coincides with $\overline{\text{span}} \{\bnu_j: \lambda_j>0\}$, the closure of the span of all eigenfunctions corresponding to positive eigenvalues. Since ${\rm Im} \D$ is dense in $(\ker\D)^{\perp}$, identifiability on ${\rm Im} \D$ follows immediately, without any additional assumptions.

\begin{remark}
     Identifiability in infinite-dimensional models contrasts sharply with the finite-dimensional case. In $d$-dimensional VAR($p$) models, autoregressive matrices are identifiable because the autocovariance matrix is invertible. Here, letting $d_0 = \sup \{j\in\bN^+: \lambda_j>0\}$, Lemma B.1 shows that ${\rm Im} \D = (\ker\D)^{\perp} \subsetneqq \H^p$ when $d_0 < \infty$, and ${\rm Im} \D = \left\{\bz \in \H^p: \sum_{j=1}^{\infty} \frac{\langle \bz, \bnu_j \rangle^2}{\lambda_j^2} < \infty\right\} \subsetneqq \H^p$ when $d_0 = \infty$. Thus, $\D$ is never surjective, and its compactness imposes inherent identifiability limitations.
%
\end{remark}

We provide a sufficient condition, which refers to the Picard condition \citep{hsing2015theoretical}, for the existence of the solution to Equation (\ref{eq: Y-W-1}) on the identifiable space $(\ker\D)^{\perp}$.

\begin{proposition}
    \label{prop: existence}
    If the condition $\sum_{j=1}^{d_0} \lambda_j^{-2}\langle x, \cE \bnu_j \rangle^2 < \infty$ holds for all $x \in \H$, then the solution for $\brho$ to $\cE = \brho\D$ restricted on $(\ker\D)^{\perp}$ exists and is unique, with its adjoint operator given by $(\brho|_{(\ker\D)^{\perp}})^* x = \sum_{j=1}^{d_0} \lambda_j^{-1}\langle x, \cE \bnu_j \rangle \bnu_j$ for all $x \in \H$.
\end{proposition}

Although identifiability is established, directly deducing $\brho = \cE\D^{-1}$ by inverting $\D$ as in the finite-dimensional case is not feasible. The issue stems from the spectral property of compact operators in infinite-dimensional Hilbert spaces: the eigenvalues of $\D$ form at most a countable sequence converging to zero, with zero as the only accumulation point (Theorem 4.2.3 in \citealt{hsing2015theoretical}). Consequently, its generalized inverse $\D^{-1}=\sum_{j=1}^{\infty} \lambda_j^{-1} \bnu_j \otimes \bnu_j$ is an unbounded operator and is highly sensitive to small perturbations in the data. 

To overcome this ill-posedness, we introduce regularization techniques to stabilize the inversion of $\D$. Traditional approaches often rely merely on truncation-based approximations of the autocovariance operator \citep{mas2007weak}, but these methods are highly sensitive to the choice of truncation dimension and may lead to unstable inversion. Instead, we propose the Tikhonov-truncation regularization approach to enhance numerical stability.

Specifically, let $k_n, n \in \bN^+$ and $\alpha_n, n \in \bN^+$ be two sequences of positive numbers such that $k_n \to \infty$ and $\alpha_n \to 0$ as $n \to \infty$, representing the truncation level and the regularization parameter, respectively. The stabilized inverse operator $\Dd$ is defined as
\begin{equation}
    \Dd := \sum_{j=1}^{k_n} \frac{\lambda_j}{\lambda_j^2 + \alpha_n} \bnu_j \otimes \bnu_j.
\end{equation}
This regularization modifies the eigenvalues of the inverse from $1/\lambda_j$ to $\lambda_j/(\lambda_j^2 + \alpha_n)$, effectively controlling the impact of small eigenvalues and ensuring a well-posed approximation. 

\begin{remark}
    Our method combines the strengths of truncation and Tikhonov regularization. Compared with pure truncation, it applies smooth spectral shrinkage to small eigenvalues, reducing sensitivity to the cutoff $k_n$ and retaining more autocovariance information. Compared with pure Tikhonov regularization \citep{kirsch2011introduction}, it achieves computational efficiency through finite-dimensional approximation while remaining theoretically consistent as $k_n \to \infty$. Theorem~\ref{thm: MSE} and simulations in Section~\ref{sec: simulation} demonstrate its superior bias–variance trade-off and overall performance.
\end{remark}

\subsection{The Yule-Walker Estimators and Predictor}
\label{subsec: Yule-Walker estimators}

Suppose that the samples $X_1,\ldots,X_n$ are observed. The empirical versions of the lagging autocovariance operator $\C_h$, the autocovariance operator $\D$, and the cross-covariance operator $\cE$ are respectively estimated as follows:
\begin{align}
    \label{eq: empirical C}
    \widehat{\C}_h &=
    \left\{
    \begin{array}{llr}
        \frac{1}{n-|h|} \sum\limits_{t=|h|+1}^n X_t\otimes X_{t+h}, & -n<h<0, \\
        \frac{1}{n- h } \sum\limits_{t=1}^{n-h}X_t \otimes X_{t+h}, & 0 \leq h<n,
    \end{array}\right. \\
    \hD &= \frac{1}{n-p} \sum_{t=p+1}^n Y_t \otimes Y_t, \notag \\
    \hE &= \frac{1}{n-p} \sum_{t=p+1}^n Y_{t-1} \otimes X_t. \notag
\end{align}
Let $(\widehat{\lambda}_j, \hnu_j), j\in\bN^+$ be the eigenpairs of $\hD$ corresponding to the computation of the functional principal component analysis (FPCA) \citep{james2000principal, yao2005functional, zhou2022theory} of $\hD$. Denote by $\hD^{\dagger}$ the stabilized inverse operator of $\hD$. The estimator $\hrho$ for $\brho$ is derived based on the Yule-Walker equation~(\ref{eq: Y-W-1}) as follows:
\begin{equation}
    \hrho := \hE \hD^{\dagger} = \hE \sum_{j=1}^{k_n} \frac{\widehat{\lambda}_j}{\widehat{\lambda}_j^2 + \alpha_n} \hnu_j \otimes \hnu_j.
\end{equation}
After observing new samples $Y_{n+p} = (X_{n+p}, \ldots, X_{n+1})^\T$, the subsequent unobserved $X_{n+p+1}$ can be predicted as
\begin{equation}
    \widehat{X}_{n+p+1} = \hrho Y_{n+p}.
\end{equation}
For the estimation of the autocovariance operator $\C_{\boldsymbol{\varepsilon}}$ and variance $\sigma_{\boldsymbol{\varepsilon}}^2 := \E \| \varepsilon_1 \|^2$ of innovations, we adopt the plug-in estimators based on Equation~(\ref{eq: Y-W-2}),
\begin{equation}
    \hC_{\boldsymbol{\varepsilon}} := \hC_0 - \hrho \hE^* \ \text{ and } \  \widehat{\sigma}_{\boldsymbol{\varepsilon}}^2 := \|\hC_{\boldsymbol{\varepsilon}}\|_{\N}.
\end{equation}

\subsection{Example of Yule-Walker Estimators in $\H=L^2([0,1], \R^d)$}
\label{sec: example}

While the Yule-Walker estimation framework has been established for general Hilbert spaces, practical implementation requires handling discrete observations of underlying random functions. To address this, we focus on the Hilbert space $\H = L^2([0,1], \R^d)$, which consists of square-integrable functions from $[0,1]$ to $\R^d \ (d\in\bN^+)$. This canonical space accommodates both univariate and multivariate functional data (e.g., curves, trajectories, multi-channel signals), making it particularly relevant for methodological development and applications. We restrict our analysis to settings where $d$ is considerably smaller than the sample size, excluding high-dimensional cases, and investigate FAR($p$) models with particular attention to estimating the autocovariance and autoregressive operators.

Suppose we have $n$ samples $\boldsymbol{X}_1(u),\ldots,\boldsymbol{X}_n(u)\in L^2([0,1], \R^d)$ that follow the FAR($p$) model (\ref{eq: ARH}) with autoregressive operators $\rho_1,\ldots,\rho_p$ and white noises $\boldsymbol{\varepsilon}_1(u),\ldots,\boldsymbol{\varepsilon}_n(u)$. Here $u \in [0, 1]$, both $\boldsymbol{X}_t(u)=(X_{t1}(u),\cdots,X_{td}(u))^\T$ and $\boldsymbol{\varepsilon}_t(u)=(\varepsilon_{t1}(u),\ldots,\varepsilon_{td}(u))^\T, \ t=1,\ldots,n$ are $d$-dimensional. The FAR($p$) model is concretized as
\begin{align}
    \label{eq: special ARHp}
    \begin{pmatrix} X_{t1}(u) \\ \vdots \\ X_{td}(u) \end{pmatrix}
    =\begin{pmatrix}
        \rho_1^{11} & \cdots & \rho_1^{1d}\\
        \vdots & \ddots & \vdots\\
        \rho_1^{d1} & \cdots &  \rho_1^{dd}
    \end{pmatrix}
    \begin{pmatrix} X_{t-1,1}(u) \\ \vdots \\ X_{t-1,d}(u) \end{pmatrix}
    +\cdots+
    \begin{pmatrix}
        \rho_p^{11} & \cdots & \rho_p^{1d}\\
        \vdots & \ddots & \vdots\\
        \rho_p^{d1} & \cdots &  \rho_p^{dd}
    \end{pmatrix}
    \begin{pmatrix} X_{t-p,1}(u) \\ \vdots \\ X_{t-p,d}(u) \end{pmatrix}
    +\begin{pmatrix} \varepsilon_{t1}(u) \\ \vdots \\ \varepsilon_{td}(u) \end{pmatrix},
\end{align}
where each $\rho_j^{kl}, j=1,\ldots,p; k,l=1,\ldots,d$, the $(k,l)$-entry of $\rho_j$, is a linear and bounded operator on $L^2([0,1], \R)$. Consider the case when each $\rho_j^{kl}$ is an integral operator, that is, there exists some square-integrable function $R_j^{kl}(u, v)$ on $[0,1]\times [0,1]$ such that
\begin{equation}
    \label{eq: kernel operator}
    \rho_j^{kl} (X(v)) (u) = \int_0^1 R_j^{kl}(u,v) X(v) d v,
\end{equation}
for any $X\in L^2([0,1], \R)$. $R_j^{kl}(u,v)$ is referred to as the \emph{kernel} of $\rho_j^{kl}$, which captures the interaction between the $k$-th and $l$-th components of the functional time series for lag $j$. In this way, an abstract operator corresponds to a bivariate function. Of interest is to estimate all $R_j^{kl}(u,v)$ and utilize them to predict the subsequent unobserved samples.

\noindent\textbf{Continuous version of the Yule-Walker equations}

It is more convenient to apply the Yule-Walker Equations~(\ref{eq: Y-W-1}), (\ref{eq: Y-W-2}) and~(\ref{eq: D_E form}) for calculations, whose forms transform from operator-based equations to those involving their corresponding kernels. Specifically, the equations are given by
\begin{align}
    &(C_1(u,v), \ldots, C_p(u,v)) \notag \\
    \label{eq: Y-W-cont-1}
    = & \int_0^1 (R_1(u,z), \ldots, R_p(u,z))
    \begin{pmatrix}
        C_0(z,v) & C_1(z,v) & \cdots & C_{p-1}(z,v)\\
        C_1(v,z) & C_0(z,v)  & \cdots & C_{p-2}(z,v) \\
        \vdots & \vdots & \ddots & \vdots\\
        C_{p-1}(v,z) & C_{p-2}(v,z)& \cdots & C_0(z,v)
    \end{pmatrix} d z,\\
    \label{eq: Y-W-cont-2}
    C_{\boldsymbol{\varepsilon}}(u,v) &= C_0(u,v) - \int_0^1 (R_1(u,z), \ldots, R_p(u,z))
    \begin{pmatrix}
        C_1(v,z) \\
        \vdots \\
        C_p(v,z)
    \end{pmatrix} d z,
\end{align}
where $C_h(u,v)$, $C_{\boldsymbol{\varepsilon}}(u,v)$ and $R_j(u,v)$ are the kernels of $\C_h$, $\C_{\boldsymbol{\varepsilon}}$ and $\rho_j$, respectively, for $h=0,1,\ldots,p$ and $j=1,\ldots,p$. The estimation of autoregressive operators lies in determining the kernels of the empirical autocovariance operators $\widehat{\C}_h$. Based on Equation~(\ref{eq: empirical C}), after centralizing the samples, the kernel of $\widehat{\C}_h$ can be expressed as 
\begin{align*}
    \widehat{C}_h (u,v) &= \frac{1}{n-h} \sum_{t=1}^{n-h} \boldsymbol{X}_{t+h}(u) \boldsymbol{X}_{t}^\T(v) \\
    &= \frac{1}{n-h} \sum_{t=1}^{n-h}
    \begin{pmatrix}
        X_{t+h,1}(u) X_{t1}(v) & \cdots & X_{t+h,1}(u) X_{td}(v) \\
        \vdots & \ddots & \vdots\\
        X_{t+h,d}(u) X_{t1}(v) & \cdots & X_{t+h,d}(u) X_{td}(v)
    \end{pmatrix}
\end{align*}
for $h \geq 0$. Substituting $C_h(u,v)$ in Equation~(\ref{eq: Y-W-cont-1}) and (\ref{eq: Y-W-cont-2}) with their empirical kernels $\widehat{C}_h (u,v)$ and employing iterative methods \citep{WazwazAbdul-Majid2011LaNI}, we can derive the Yule-Walker estimations $\widehat{R}_j(u,v)$ and $\widehat{C}_{\boldsymbol{\varepsilon}}(u,v)$ for $R_j(u,v)$ and $ C_{\boldsymbol{\varepsilon}}(u,v)$, respectively. Then the prediction can be constructed by
\begin{equation}
    \widehat{\boldsymbol{X}}_{n+p+1} (u) = \int_0^1 \left(\widehat{R}_1(u,v)\boldsymbol{X}_{n+p} (v) + \cdots + \widehat{R}_p(u,v)\boldsymbol{X}_{n+1} (v) \right) d v.
\end{equation}

\noindent\textbf{Discrete version of the Yule-Walker equations}

In practice, only discrete observations of each function are available. Assume that for each $X_{tr}(u)$, we observe $\widetilde{X}_{tr}=(X_{tr}(s_1),\ldots,X_{tr}(s_g))^\T$, where $s_1, \ldots, s_g$ are $g$ equidistant points in $[0, 1]$ with $0=s_1<s_2<\cdots<s_g=1$ for $t=1,\ldots,n$ and $r=1.\ldots,d$. Unequally spaced observations can be preprocessed via interpolation. The concatenated vector $\widetilde{\boldsymbol{X}}_t = (\widetilde{X}_{t1}^\top,\ldots,\widetilde{X}_{td}^\top)^\top$ represents the stacked discrete observations across all dimensions. From the perspective of approximating integrals by summations, the discrete version of the Yule-Walker equations becomes
\begin{align}
    \label{eq: Y-W-disc-1}
    (\widetilde{\boldsymbol{C}}_1, \ldots, \widetilde{\boldsymbol{C}}_p) =&~ (\widetilde{\boldsymbol{R}}_1, \ldots, \widetilde{\boldsymbol{R}}_p)
    \begin{pmatrix}
        \widetilde{\boldsymbol{C}}_0 & \widetilde{\boldsymbol{C}}_1 & \cdots &\widetilde{\boldsymbol{C}}_{p-1}\\
        \widetilde{\boldsymbol{C}}_1^\T & \widetilde{\boldsymbol{C}}_0  & \cdots & \widetilde{\boldsymbol{C}}_{p-2} \\
        \vdots & \vdots & \ddots & \vdots\\
        \widetilde{\boldsymbol{C}}_{p-1}^\T & \widetilde{\boldsymbol{C}}_{p-2}^\T& \cdots & \widetilde{\boldsymbol{C}}_0
    \end{pmatrix},\\
    \label{eq: Y-W-disc-2}
    \widetilde{\boldsymbol{C}}_{\boldsymbol{\varepsilon}} =&~ \widetilde{\boldsymbol{C}}_0 - (\widetilde{\boldsymbol{R}}_1, \ldots, \widetilde{\boldsymbol{R}}_p)
    \begin{pmatrix}
        \widetilde{\boldsymbol{C}}_1^\T \\
        \vdots \\
        \widetilde{\boldsymbol{C}}_p^\T
    \end{pmatrix},
\end{align}
where $\widetilde{\boldsymbol{C}}_h$, $\widetilde{\boldsymbol{C}}_{\boldsymbol{\varepsilon}}$, and $\widetilde{\boldsymbol{R}}_j$ are $dg\times dg$ discretized matrices of the kernels $C_h(u,v)$, $C_{\boldsymbol{\varepsilon}}(u,v)$, and $R_j(u,v)$, respectively. Each matrix is constructed by evaluating the corresponding kernel function on a $g \times g$ uniform grid over $[0,1]^2$ and scaling the entries by $1/g$ to approximate the integral operator. Specifically, for $h = 0, 1, \ldots, p$,
\begin{equation*}
    \widetilde{\boldsymbol{C}}_h = \frac{1}{g} \frac{1}{n-h} \sum_{t=1}^{n-h} \widetilde{\boldsymbol{X}}_{t+h} \widetilde{\boldsymbol{X}}_{t}^\T.
\end{equation*}

In the discrete formulation (\ref{eq: Y-W-disc-1}), the inverse of a $pdg\times pdg$ matrix is required. By applying the proposed regularization approach in Section \ref{subsec: identifiability} with parameters $k_n$ and $\alpha_n$, we obtain the estimations $\widehat{\widetilde{\boldsymbol{R}}}_j$ and $\widehat{\widetilde{\boldsymbol{C}}}_{\boldsymbol{\varepsilon}}$. Multiplying these estimations by $g$ provides discrete versions of $\widehat{R}_j(u,v)$ and $\widehat{C}_{\boldsymbol{\varepsilon}}(u,v)$.

Then the prediction can be proposed by
\begin{equation}
    \widehat{\widetilde{\boldsymbol{X}}}_{n+p+1} = \widehat{\widetilde{\boldsymbol{R}}}_1 \widetilde{\boldsymbol{X}}_{n+p} + \cdots + \widehat{\widetilde{\boldsymbol{R}}}_p \widetilde{\boldsymbol{X}}_{n+1}.
\end{equation}
According to Equation~(\ref{eq: CI}) below, the $(1-\alpha)$ confidence interval for the true value of $\widetilde{\boldsymbol{R}}_1 \widetilde{\boldsymbol{X}}_{n+p} + \cdots + \widetilde{\boldsymbol{R}}_p \widetilde{\boldsymbol{X}}_{n+1}$ is given point-wise by
\begin{equation}
    \left[ \widehat{\widetilde{\boldsymbol{X}}}_{n+p+1} \pm z_{1-\alpha/2} \sqrt{\frac{\widehat{t}_n {\rm diag}( \widetilde{\boldsymbol{C}}_{\boldsymbol{\varepsilon}})}{n}} \right],
\end{equation}
where ${\rm diag} (\widetilde{\boldsymbol{C}}_{\boldsymbol{\varepsilon}})$ is a vector formed by the diagonal elements of the matrix $\widetilde{\boldsymbol{C}}_{\boldsymbol{\varepsilon}}$, $\widehat{t}_n$ is as shown in Equation~(\ref{eq: t_n}) with $\lambda_j$ replaced by $\widehat{\lambda}_j$, and $z_{1-\alpha/2}$ denotes the $(1-\alpha)$-th quantile of the standard normal distribution.

\section{Asymptotic Results}
\label{sec: asymptotic results}

\subsection{Model Assumptions}
\label{subsec: model assumptions}
In this section, we require some assumptions to establish the asymptotic properties of the proposed estimator $\hrho$ and predictor $\widehat{X}_{n+p+1}$.

\begin{assumption}
    \label{ass: moment condition}
    Let $\boldsymbol{\varepsilon}$ in the model (\ref{eq: ARH}) be a strong white noise, and $\E \|X_1\|^4 < \infty$.
\end{assumption}

\begin{assumption}
    \label{ass: operator condition}
     There exists an integer $j_0 \geq 1$ such that $\|\bPhi^{j_0}\|_{\L} < 1$.
\end{assumption}

\begin{assumption}
    \label{ass: identifiability condition}
    (i) All eigenvalues of $\D$ are positive and distinct. (ii) The Picard condition $\sum_{j=1}^{\infty} \lambda_j^{-2}\langle x, \cE \bnu_j \rangle^2 < \infty, \forall x \in \H$ holds.
\end{assumption}

\begin{assumption}
    \label{ass: convexity of the function of eigenvalues}
    For $j \geq 2$, the eigenvalues satisfy $\lambda_j - \lambda_{j+1} \leq \lambda_{j-1} - \lambda_j$. 
\end{assumption}

The four assumptions above serve as regularity conditions. The finite fourth-moment condition in Assumption \ref{ass: moment condition} is necessary for establishing the asymptotic normality of the predictor. Assumption \ref{ass: operator condition} guarantees the existence of a unique stationary solution to model (\ref{eq: ARH}). Assumption~\ref{ass: identifiability condition} guarantees identifiability as shown in Propositions~\ref{prop: identifiability on Hp}–\ref{prop: existence}, and also circumvents potential complications arising from eigen-subspaces of dimension greater than one.

Assumption \ref{ass: convexity of the function of eigenvalues} imposes a mild restriction on the decay rate of the eigenvalues of $\D$, as it holds in many classical cases, including arithmetic decay ($\lambda_j = C/j^{1+\alpha}, \alpha >0$) or exponential decay ($\lambda_j = C \exp (-\alpha j), \alpha > 0$), where $C$ is a positive constant \citep{mas2007weak}. Notably, even if the inequality $\lambda_j - \lambda_{j+1} \leq \lambda_{j-1} - \lambda_j$ fails to hold for finitely many indices $j$, the asymptotic properties established in this section remain valid. For simplicity, we continue to adopt Assumption \ref{ass: convexity of the function of eigenvalues}. We define the eigenvalue gap as $\delta_j := \lambda_j-\lambda_{j+1}$ for $j\geq 1$.

Denote $\Pi_k$ and $\widehat{\Pi}_k$ as the projectors onto the space spanned by the first $k$ eigenvectors of $\D$ and $\hD$, respectively. By Karhunen–Lo{\`e}ve (K-L) expansion, $Y_t$ can be represented as
\begin{equation}
    \label{eq: KL}
    Y_t \overset{d}{=} \sum_{j=1}^{\infty} \eta_{j}^{(t)}\sqrt{\lambda_j}\boldsymbol{\nu}_j
\end{equation}
under Assumptions \ref{ass: moment condition}-\ref{ass: operator condition}, where \enquote{$\overset{d}{=}$} denotes equality in distribution, and $\eta_{j}^{(t)}$ are uncorrelated real-valued random variables with zero mean and unit variance.

\subsection{Consistency}
\label{subsec: consistency}
We establish the following consistency properties for the Yule-Walker estimator $\hrho$.

\begin{theorem}
    \label{thm: consistency}
    Let Assumption \ref{ass: moment condition}-\ref{ass: convexity of the function of eigenvalues} hold.  \\
    (i) If $k_n \equiv K$ is finite, and $\alpha_n = \smallo \left(n^{-1/4}(\log n)^{\beta}\right)$ for some $\beta > 1/2$, then
    \begin{equation}
        \|\hrho - \brho \Pi_K\|_{\S} = \smallo_{\P} \left(\frac{(\log n)^{\beta}}{n^{1/4}}\right).
    \end{equation}
    (ii) If $k_n\rightarrow\infty$ as $n\rightarrow\infty$, $\alpha_n \lambda_{k_n}^{-2} = \smallo (1)$ and $k_n \gamma_{k_n}/\lambda_{k_n} = \smallo \left(n^{1/4}(\log n)^{-\beta}\right)$ for some $\beta > 1/2$, where $\gamma_j := (\lambda_j - \lambda_{j+1})^{-1}$, then
    \begin{equation}
        \|\hrho - \brho\Pi_{k_n} \|_{\S} = \smallo_{\P}(1).
    \end{equation}
    (iii) Under the same conditions as in (ii), if $\sum_{j=k_n+1}^{\infty} \|\brho(\bnu_j)\|^2 = \smallo(1)$, then
    \begin{equation}
        \|\hrho - \brho\|_{\S} = \smallo_{\P}(1).
    \end{equation}
\end{theorem}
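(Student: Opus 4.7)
The plan is to unwind the estimator via the Yule-Walker identity $\cE = \brho \D$ and decompose the error into four pieces that can be controlled separately. Writing $\hrho = \hE \hD^{\dagger}$, inserting $\pm \cE \hD^{\dagger}$ and $\pm \brho \hD \hD^{\dagger}$, and using the spectral identity $\hD \hD^{\dagger} = \widehat{\Pi}_{k_n} - \sum_{j=1}^{k_n} \alpha_n/(\widehat{\lambda}_j^2+\alpha_n)\, \hnu_j \otimes \hnu_j$, one obtains
\begin{equation*}
    \hrho - \brho \Pi_{k_n} = \underbrace{(\hE - \cE)\hD^{\dagger}}_{T_1} + \underbrace{\brho(\D - \hD)\hD^{\dagger}}_{T_2} + \underbrace{\brho(\widehat{\Pi}_{k_n} - \Pi_{k_n})}_{T_3} - \underbrace{\brho \sum_{j=1}^{k_n} \frac{\alpha_n}{\widehat{\lambda}_j^2+\alpha_n}\, \hnu_j \otimes \hnu_j}_{T_4},
\end{equation*}
where $T_1, T_2$ are the sampling errors in the cross- and autocovariance, $T_3$ is the empirical eigen-projector perturbation, and $T_4$ is the deterministic Tikhonov bias.

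For the stochastic pieces I would invoke Bosq-style almost-sure rates, available under Assumption \ref{ass: moment condition} and the geometric stability granted by Assumption \ref{ass: operator condition}, yielding $\|\hD - \D\|_{\L}$ and $\|\hE - \cE\|_{\S} = O_{\P}(n^{-1/2}(\log n)^{\beta})$. Coupling this with $\|\hD^{\dagger}\|_{\L} \leq \widehat{\lambda}_{k_n}^{-1}$ on the range of $\widehat{\Pi}_{k_n}$, plus the Weyl-type control $\widehat{\lambda}_{k_n}/\lambda_{k_n} \overset{\P}{\to} 1$, bounds $\|T_1\|_{\S}$ and $\|T_2\|_{\S}$ by $O_{\P}(n^{-1/2}(\log n)^{\beta}/\lambda_{k_n})$. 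For $T_3$ I would apply a Davis-Kahan / Bosq eigen-projector perturbation bound, $\|\widehat{\Pi}_{k_n} - \Pi_{k_n}\|_{\S} \lesssim k_n \gamma_{k_n} \|\hD - \D\|_{\L}$, valid under the spectral-gap monotonicity in Assumption \ref{ass: convexity of the function of eigenvalues}(i). The bias $T_4$ I would bound spectrally, using $\alpha_n/(\widehat{\lambda}_j^2+\alpha_n) \leq \alpha_n/\widehat{\lambda}_{k_n}^2$ on the truncated subspace and transferring between empirical and population eigenbases via the coupling already employed for $T_3$.

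Part (i) with $K$ fixed is the easy case: $\widehat{\lambda}_K$ and $\gamma_K$ are $O_{\P}(1)$, so $T_1, T_2, T_3$ are all $O_{\P}(n^{-1/2}(\log n)^{\beta})$, whereas $T_4 = O_{\P}(\alpha_n) = o(n^{-1/4}(\log n)^{\beta})$; the bias dominates and the stated rate follows. Part (ii) requires each of $T_1, \dots, T_4$ to be $o_{\P}(1)$, and the conditions $\alpha_n \lambda_{k_n}^{-2} = o(1)$ and $k_n \gamma_{k_n}/\lambda_{k_n} = o(n^{1/4}(\log n)^{-\beta})$ are calibrated precisely for this: the first kills the bias, while the second controls both the projector perturbation and the sampling-through-inverse terms. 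Part (iii) is a one-line consequence of (ii): $\|\hrho - \brho\|_{\S} \leq \|\hrho - \brho \Pi_{k_n}\|_{\S} + \|\brho(\I - \Pi_{k_n})\|_{\S}$, and Parseval combined with the new hypothesis $\sum_{j > k_n} \|\brho \bnu_j\|^2 = o(1)$ makes the second summand vanish.

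The main obstacle will be the Hilbert-Schmidt analysis of $T_4$ in (ii). A naive rank-$k_n$ bound yields $\|T_4\|_{\S} \lesssim \sqrt{k_n}\, \|\brho\|_{\L}\, \alpha_n/\lambda_{k_n}^2$, which is not automatically $o(1)$ under the stated conditions. My plan is to sharpen this by exploiting $\|\brho \bnu_j\|^2 \leq \|\cE \bnu_j\|^2/\lambda_j^2$ (a consequence of $\cE = \brho \D$) together with the Picard condition of Assumption \ref{ass: identifiability condition} and the trace-class structure ensured by Assumption \ref{ass: convexity of the function of eigenvalues}(ii), so that dominated convergence applied to $\sum_{j=1}^{k_n} (\alpha_n/(\lambda_j^2+\alpha_n))^2 \|\brho \bnu_j\|^2$ yields vanishing as $\alpha_n \to 0$; the $\hnu_j$-versus-$\bnu_j$ discrepancy is then absorbed into the perturbation bound already used for $T_3$.
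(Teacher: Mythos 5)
Your decomposition is algebraically correct and the overall strategy (Yule--Walker identity, Bosq-type covariance rates, eigen-perturbation, operator H\"older inequality) is the same as the paper's, but the grouping of terms differs in a way worth comparing. The paper writes $\hrho - \brho\Pi_{k_n} = (\hE-\cE)\hD^{\dagger} + \brho\D(\hD^{\dagger}-\Dd) + \brho(\D\Dd-\Pi_{k_n})$, so the Tikhonov bias appears at the \emph{population} level and is dispatched in one line via $\|\A\B\|_{\S}\leq\sqrt{2}\|\A\|_{\S}\|\B\|_{\L}$ with the Hilbert--Schmidt norm placed on $\brho$ and the operator norm on the diagonal multiplier: $\|\brho(\D\Dd-\Pi_{k_n})\|_{\S}\leq\sqrt{2}\,\|\brho\|_{\S}\,\alpha_n/\lambda_{k_n}^2=\smallo(1)$, with no dimension factor. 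All the empirical-versus-population eigenstructure is then concentrated in the single term $\|\hD^{\dagger}-\Dd\|_{\L}$, bounded via the eigenvalue and (sign-corrected) eigenvector perturbation lemmas of K\"uhnert under Assumption \ref{ass: convexity of the function of eigenvalues}(i). Your version keeps the bias at the \emph{empirical} level ($T_4$) and isolates the projector perturbation ($T_3$); this is equally valid, and your Davis--Kahan-type bound $\|\widehat{\Pi}_{k_n}-\Pi_{k_n}\|\lesssim k_n\gamma_{k_n}\|\hD-\D\|$ is exactly what the gap-monotonicity assumption buys. However, the ``main obstacle'' you flag for $T_4$ is not really an obstacle: the same H\"older trick kills the $\sqrt{k_n}$, since $\sum_{j\leq k_n}\frac{\alpha_n}{\widehat{\lambda}_j^2+\alpha_n}\hnu_j\otimes\hnu_j$ has operator norm at most $\alpha_n/\widehat{\lambda}_{k_n}^2$ and $\brho$ is (implicitly, here as in the paper) Hilbert--Schmidt, so $\|T_4\|_{\S}\leq\sqrt{2}\|\brho\|_{\S}\alpha_n/\widehat{\lambda}_{k_n}^2=\O_{\P}(\alpha_n/\lambda_{k_n}^2)$ after the eigenvalue-ratio coupling; your Picard-condition/dominated-convergence detour is correct but unnecessary. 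One small calibration point: the paper's Lemma on $\|\hD-\D\|_{\S}$ and $\|\hE-\cE\|_{\S}$ (obtained by applying Bosq's Theorem 4.1 to the augmented $(p+1)$-block covariance) gives the almost-sure rate $\smallo(n^{-1/4}(\log n)^{\beta})$, which is what the hypotheses of (i) and (ii) are tuned to; your claimed $\O_{\P}(n^{-1/2}(\log n)^{\beta})$ is the in-probability second-moment rate and also suffices for the stated conclusions, but you should be consistent about which one you invoke. Part (iii) is handled identically in both arguments.
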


The results of Theorem \ref{thm: consistency} are derived through a step-by-step analysis based on the projection of the true operator $\brho$. Ideally, one would directly evaluate $\hrho - \brho$. However, since the construction of $\hrho$ relies solely on the first $k_n$ eigenvalues of $\hD$, it is natural to project $\brho$ onto the space spanned by the first $k_n$ eigenvectors of $\D$. If, in addition, the contribution of $\brho$ along the directions associated with the tail eigenvectors is sufficiently small, the consistency of $\hrho$ with $\brho$ can be established.

\subsection{Asymptotic Normality}
\label{subsec: CLT}

To further establish the asymptotic normality, we require additional assumptions.
\begin{assumption}
    \label{ass: KL}
    There exists a positive constant $M$ such that $\sup_j \mathbb{E} [(\eta_{j}^{(t)})^4] \leq M$.
\end{assumption}

\begin{assumption}
    \label{ass: smoothness}
    (i) Let $\tbPhi := \mathcal{D}^{\dagger} \bPhi$. For all $k_n$, $\| \tbPhi \| _{\L} < \infty$. 
    (ii) $\E\|\Dd\zeta_1\|^4 < \infty$.
\end{assumption}

\begin{assumption}
    \label{ass: bias of Pi}
   $\sqrt{n/k_n}~\brho (\widehat{\Pi}_{k_n} - \Pi_{k_n})(Y_{n+p}) \overset{\P}{\rightarrow} 0_{\H}$.
\end{assumption}

\begin{assumption}
    \label{ass: after k}
    $\sum_{j=k_n+1}^{\infty} \lambda_j \|\brho (\bnu_j)\|^2 = \O \left(1/n\right)$.
\end{assumption}

Assumption \ref{ass: KL} implies that $\E \|X_1\|^4 < \infty$, as required in Assumption \ref{ass: moment condition}. This condition holds for a broad class of real-valued random variables, including Gaussian and uniform distributions. Assumption \ref{ass: smoothness}(i) imposes a smoothness requirement on $\bPhi$, ensuring that it is at least as smooth as the operator $(\Dd)^{-1}=\sum_{j=1}^{k_n} ((\lambda_j^2+\alpha_n)/\lambda_j) \bnu_j\otimes\bnu_j$. Assumption \ref{ass: smoothness}(ii) restricts the noise level for technical feasibility.

Assumption \ref{ass: bias of Pi} is crucial for transforming a random bias into a non-random one, diverging from the result build in \citet{mas2007weak}. Proposition C.2 discusses the case when Assumption \ref{ass: bias of Pi} holds under mixing conditions (see \cite{dehling1983limit} for more details on mixing conditions). Lastly, Assumption \ref{ass: after k} ensures that $k_n$ is sufficiently large, preventing the truncation error from being excessively large.

Unlike common results, we first demonstrate that the asymptotic normality fails for the estimator $\hrho$.

\begin{theorem}
    \label{thm: CLT for operator}
    It is impossible for $\hrho - \brho$ to converge in distribution to a non-degenerate operator for the norm topology on $\K$, the space of compact operators from $\H^p$ to $\H$.
\end{theorem}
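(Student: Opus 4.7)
The plan is to argue by contradiction, trading off the compactness forced on any hypothetical limit operator against the fact that the convergence rate of $\hrho - \brho$ is direction-dependent. Suppose there exist a deterministic sequence $a_n > 0$ and a non-degenerate $\K$-valued random operator $Z$ with $a_n(\hrho - \brho) \overset{d}{\to} Z$ in the operator-norm topology on $\K$. Theorem~\ref{thm: consistency} delivers consistency of $\hrho$, so non-degeneracy of $Z$ forces $a_n \to \infty$. Moreover $\K$ is closed in $\L_{\H^p,\H}$ under $\|\cdot\|_\L$, so $Z \in \K$ almost surely; consequently, since the orthonormal family $\{\bnu_j\}$ is weakly null in $\H^p$, compactness yields $\|Z\bnu_j\|_\H \to 0$ almost surely as $j \to \infty$.

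Next I would transfer the operator-norm convergence to each eigen-direction of $\D$. The evaluation map $T \mapsto T\bnu_j$ is continuous from $(\K, \|\cdot\|_\L)$ to $\H$, so by the continuous mapping theorem $a_n(\hrho - \brho)\bnu_j \overset{d}{\to} Z\bnu_j$ for every fixed $j$. To identify the natural rate in the $j$-th direction, I decompose
\begin{equation*}
(\hrho - \brho)\bnu_j \;=\; (\hE - \brho\hD)\hD^{\dagger}\bnu_j \;+\; \brho(\hD\hD^{\dagger} - \I_{\H^p})\bnu_j,
\end{equation*}
use $\hD^{\dagger}\bnu_j \approx \lambda_j^{-1}\bnu_j$ (valid for $\alpha_n$ small and with $\hnu_j$ close to $\bnu_j$ by Assumption~\ref{ass: convexity of the function of eigenvalues}), and apply a Hilbert-space CLT to the centered stationary array $\{\langle Y_{t-1}, \bnu_j\rangle \varepsilon_t\}_t$. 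This shows that the natural normalizing rate in the $j$-th direction is $\sqrt{n\lambda_j}$, and that $\sqrt{n\lambda_j}\,(\hrho - \brho)\bnu_j \overset{d}{\to} W_j$ with $W_j$ a non-degenerate $\H$-valued random variable.

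Combining these facts, $Z\bnu_j \overset{d}{=} \lim_n (a_n/\sqrt{n\lambda_j})\,W_j$. Non-degeneracy of $Z$ forces $a_n/\sqrt{n\lambda_{j_0}} \to c \in (0,\infty)$ for some $j_0$, so $a_n \asymp \sqrt{n\lambda_{j_0}}$, and for every $j$, $a_n/\sqrt{n\lambda_j} \to c\sqrt{\lambda_{j_0}/\lambda_j}$. Since $\lambda_j \to 0$ this ratio diverges as $j \to \infty$, whence $\|Z\bnu_j\|_\H$ diverges in probability, contradicting the compactness-driven $\|Z\bnu_j\|_\H \to 0$ almost surely. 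The main obstacle is obtaining the directional rate $\sqrt{n\lambda_j}$ as an exact two-sided order rather than merely an upper bound: one must disentangle the stochastic contribution $(\hE - \brho\hD)\hD^{\dagger}\bnu_j$ from the Tikhonov-bias term $\brho(\hD\hD^{\dagger} - \I_{\H^p})\bnu_j$ to prevent cancellation, and control the discrepancy between the empirical spectral pair $(\widehat{\lambda}_j, \hnu_j)$ and the population pair via the eigen-gap condition in Assumption~\ref{ass: convexity of the function of eigenvalues}. Everything else reduces to standard Hilbert-space arguments.
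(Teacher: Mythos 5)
Your proposal is correct in substance and rests on the same core mechanism as the paper's proof: a convergence-of-types argument showing that the normalization required for weak convergence is direction-dependent (of order $\sqrt{n\lambda_j}$ in the direction $\bnu_j$, equivalently $\sqrt{n}$ only on the domain of $\D^{-1}$), so no single rate $a_n$ can serve all directions. Where you differ is in how the contradiction is closed. The paper first reduces $\hrho-\brho$ to the single stochastic term $\S_n\Dd$ (the bias terms being negligible at the relevant rate, by the lemmas feeding Theorem \ref{thm: CLT}), then tests $a_n\S_n\Dd$ against one fixed vector $\boldsymbol{h}$ in the domain of $\D^{-1}$ to pin $a_n\asymp\sqrt{n}$, and against one explicitly constructed $\boldsymbol{h}$ outside that domain for which the second moment of $\sqrt{n}\langle\S_n\Dd(\boldsymbol{h}),h\rangle$ equals $\sigma_{\varepsilon,h}^2k_n\to\infty$; the incompatibility of the two marginals ends the proof. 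You instead run over the whole family of eigendirections and invoke compactness of the hypothetical limit $Z$ (so that $\|Z\bnu_j\|\to 0$ along the weakly null sequence $\bnu_j$) to contradict the divergence of $a_n/\sqrt{n\lambda_j}$. That compactness step is a nice structural addition not present in the paper and makes the role of the topology on $\K$ explicit, but it obliges you to establish a \emph{two-sided} directional rate for the full difference $(\hrho-\brho)\bnu_j$, including non-cancellation against the Tikhonov and projection bias terms --- exactly the obstacle you flag. The paper's route sidesteps this by isolating $\S_n\Dd$ up front and exploiting the exact variance identity $\E\langle\D\Dd\boldsymbol{h},\Dd\boldsymbol{h}\rangle=\sum_{j\le k_n}\lambda_j^3(\lambda_j^2+\alpha_n)^{-2}\langle\boldsymbol{h},\bnu_j\rangle^2$, so you may want to borrow that reduction to make your directional-rate lemma cheap rather than the main technical burden.
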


This conclusion is unsurprising. Establishing global asymptotic normality remains challenging even for i.i.d. data under nonparametric frameworks, and our dependent time series setting further complicates this objective.

Since the asymptotic normality for the operator estimator $\hrho$ does not hold, we consider the asymptotic normality for the predictor $\hrho Y_{n+p}$. Let
\begin{equation}
    \label{eq: t_n}
    t_n := \|\Dd \D\|_{\S}^2 = \sum_{j=1}^{k_n}\left(\frac{\lambda_j^2}{\lambda_j^2 + \alpha_n}\right)^2.
\end{equation}

\begin{theorem}
    \label{thm: CLT}
    (i) Under Assumptions \ref{ass: moment condition}-\ref{ass: smoothness}, if $\lambda_{k_n}^{-2}\alpha_n n^{1/2} = \O(1)$ and $n^{-1/2} k_n^{5/2} (\log k_n)^2 = \smallo(1)$, then
    \begin{equation}
        \sqrt{\frac{n}{t_n}} \left(\hrho Y_{n+p} - (\brho \widehat{\Pi}_{k_n})Y_{n+p}\right) \overset{d}{\rightarrow} G,
    \end{equation}
    where the convergence is in distribution in $\H$, and $G$ is an $\H$-valued Gaussian random variable with mean zero and covariance operator $\C_{\boldsymbol{\varepsilon}}$. \\
    (ii) Moreover, if Assumption \ref{ass: bias of Pi} holds, then
    \begin{equation}
        \sqrt{\frac{n}{t_n}} \left(\hrho Y_{n+p} - (\brho \Pi_{k_n})Y_{n+p}\right) \overset{d}{\rightarrow} G.
    \end{equation}
    (iii) Further, if Assumption \ref{ass: after k} holds, then
    \begin{equation}
        \sqrt{\frac{n}{t_n}} \left(\hrho Y_{n+p} - \brho Y_{n+p}\right) \overset{d}{\rightarrow} G.
    \end{equation}
\end{theorem}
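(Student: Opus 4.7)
The plan is to exploit the Markov identity $X_t = \brho(Y_{t-1}) + \varepsilon_t$ implicit in (\ref{eq: ARHp}) to write $\hE = \brho\hD + S_n$, where $S_n := (n-p)^{-1}\sum_{t=p+1}^n Y_{t-1}\otimes\varepsilon_t$, and then to use the identity $\hD\hD^{\dagger} = \widehat{\Pi}_{k_n} - \alpha_n\sum_{j\leq k_n}(\widehat{\lambda}_j^2+\alpha_n)^{-1}\hnu_j\otimes\hnu_j$ to obtain the fundamental decomposition
\begin{equation*}
    \hrho(Y_{n+p}) - \brho\widehat{\Pi}_{k_n}(Y_{n+p}) = S_n\hD^{\dagger}(Y_{n+p}) \;-\; \alpha_n\brho\sum_{j=1}^{k_n}\frac{\hnu_j\otimes\hnu_j}{\widehat{\lambda}_j^2+\alpha_n}(Y_{n+p}).
\end{equation*}
For the Tikhonov-bias term on the right, standard eigenvalue concentration ($|\widehat{\lambda}_j-\lambda_j|\leq\|\hD-\D\|_{\L} = \O_{\P}(n^{-1/2})$) lets me replace $\widehat{\lambda}_{k_n}$ by $\lambda_{k_n}$, after which its norm is bounded by $C\alpha_n\lambda_{k_n}^{-2}\|\brho(Y_{n+p})\|$; multiplied by $\sqrt{n/t_n}$ this is $\smallo_{\P}(1)$ under the hypothesis $\alpha_n\lambda_{k_n}^{-2}\sqrt{n} = \O(1)$, which forces $t_n \sim k_n \rightarrow \infty$.

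For the stochastic term $S_n\hD^{\dagger}(Y_{n+p})$, I would first substitute the deterministic operator $\Dd$ for $\hD^{\dagger}$. Expressing $\hD^{\dagger}-\Dd$ spectrally and invoking Bosq-type perturbation inequalities that control $\|\hnu_j - \bnu_j\|$ through $\gamma_j$ and $\|\hD-\D\|_{\L}$, then aggregating these errors across the $k_n$ retained directions, the replacement error $\sqrt{n/t_n}\|S_n(\hD^{\dagger}-\Dd)(Y_{n+p})\|$ is shown to be $\smallo_{\P}(1)$ precisely by virtue of the rate $n^{-1/2}k_n^{5/2}(\log k_n)^2\rightarrow 0$. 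It then suffices to prove
\begin{equation*}
    \sqrt{\frac{n}{t_n}}\,S_n\Dd(Y_{n+p}) = \frac{1}{\sqrt{n\,t_n}}\cdot\frac{n}{n-p}\sum_{t=p+1}^{n}\langle Y_{t-1},\Dd(Y_{n+p})\rangle\,\varepsilon_t \;\overset{d}{\rightarrow}\; G.
\end{equation*}
I would apply the $\H$-valued martingale CLT to this triangular array: after decomposing $Y_{n+p}$ into a part measurable with respect to $\F_{t-1}=\sigma(\varepsilon_s: s<t)$ and a part depending on future innovations (whose cross-contribution vanishes by the geometric decay $\|\bPhi^{n+p-t}\|_{\L}\rightarrow 0$ from Assumption \ref{ass: operator condition}), the summands are martingale differences in $\H$. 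Their conditional second moment reduces to $(n\,t_n)^{-1}\sum_t\langle Y_{t-1},\Dd Y_{n+p}\rangle^2\,\C_{\boldsymbol{\varepsilon}}$; ergodicity gives the limit $t_n^{-1}\langle\D\Dd Y_{n+p},\Dd Y_{n+p}\rangle\C_{\boldsymbol{\varepsilon}}$, and the identity $\E\langle\D\Dd Y_{n+p},\Dd Y_{n+p}\rangle = \|\Dd\D\|_{\S}^2 = t_n$, combined with a variance bound based on the K-L expansion (\ref{eq: KL}) and Assumption \ref{ass: KL}, shows the latter concentrates at $\C_{\boldsymbol{\varepsilon}}$. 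The Lindeberg condition is verified via the fourth-moment hypotheses of Assumptions \ref{ass: moment condition}--\ref{ass: smoothness}.

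Parts (ii) and (iii) follow from (i) by Slutsky-type arguments. Since $\alpha_n\lambda_{k_n}^{-2} = \smallo(1)$ under the hypothesis of (i), one has $t_n = k_n(1+\smallo(1))$ and hence $\sqrt{n/t_n}\sim\sqrt{n/k_n}$; Assumption \ref{ass: bias of Pi} therefore annihilates the residual $\brho(\widehat{\Pi}_{k_n}-\Pi_{k_n})(Y_{n+p})$ in this scaling, yielding (ii). For (iii), the K-L expansion (\ref{eq: KL}) and orthonormality give $\E\|\brho(\I_{\H^p}-\Pi_{k_n})(Y_{n+p})\|^2 = \sum_{j>k_n}\lambda_j\|\brho(\bnu_j)\|^2 = \O(1/n)$ by Assumption \ref{ass: after k}, so $\sqrt{n/t_n}\|\brho(\I_{\H^p}-\Pi_{k_n})(Y_{n+p})\| = \smallo_{\P}(1)$ by Markov's inequality. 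The main obstacle of the whole argument is the spectral-perturbation step controlling $S_n(\hD^{\dagger}-\Dd)(Y_{n+p})$: eigenvector errors scale unfavorably in $k_n$ through $\gamma_{k_n}$ and must be carefully aggregated across all $k_n$ retained directions while respecting the correlation between the training sample and the test point $Y_{n+p}$, which is precisely what demands the strong growth condition $n^{-1/2}k_n^{5/2}(\log k_n)^2\rightarrow 0$.
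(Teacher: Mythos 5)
Your overall architecture coincides with the paper's: the identity $\hE=\brho\hD+\S_n$, the split of $\hrho(Y_{n+p})-\brho\widehat{\Pi}_{k_n}(Y_{n+p})$ into a Tikhonov-bias term $\brho(\hD\hD^{\dagger}-\widehat{\Pi}_{k_n})(Y_{n+p})$, a perturbation term $\S_n(\hD^{\dagger}-\Dd)(Y_{n+p})$, and a leading term $\S_n\Dd(Y_{n+p})$ handled by an $\H$-valued martingale CLT after isolating the part of $Y_{n+p}$ driven by innovations occurring after time $t$; and parts (ii)–(iii) by Slutsky using $t_n\asymp k_n$, Assumption \ref{ass: bias of Pi}, and the K-L computation $\E\|\brho(\I_{\H^p}-\Pi_{k_n})(Y_{n+p})\|^2=\sum_{j>k_n}\lambda_j\|\brho(\bnu_j)\|^2$. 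The Tikhonov-bias bound and the treatment of (ii)–(iii) are essentially the paper's Lemmas on bias.

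The genuine gap is in the step you yourself identify as the main obstacle: controlling $\sqrt{n/t_n}\,\S_n(\hD^{\dagger}-\Dd)(Y_{n+p})$ (and, implicitly, $\brho(\widehat{\Pi}_{k_n}-\Pi_{k_n})(Y_{n+p})$ if one wants more than to assume it away). You propose to do this by bounding $\|\hnu_j-\bnu_j\|$ through $\gamma_j=(\lambda_j-\lambda_{j+1})^{-1}$ and $\|\hD-\D\|_{\L}$ and aggregating over $j\leq k_n$. That route cannot deliver the stated conclusion under the hypothesis $n^{-1/2}k_n^{5/2}(\log k_n)^2\rightarrow 0$, because that hypothesis is free of the eigengaps: for admissible spectra (e.g.\ exponential decay $\lambda_j=Ce^{-\alpha j}$, allowed by Assumption \ref{ass: convexity of the function of eigenvalues}) one has $\gamma_{k_n}\sim e^{\alpha k_n}$, so any bound that accumulates raw eigenvector errors of size $\gamma_j\|\hD-\D\|_{\L}$ blows up exponentially in $k_n$ and is not absorbed by a polynomial rate condition. (Compare Theorem \ref{thm: consistency}(ii), where precisely for this reason the rate condition does involve $k_n\gamma_{k_n}/\lambda_{k_n}$.) The paper avoids this by never estimating eigenvectors directly: it represents $\hD^{\dagger}-\Dd$ and $\widehat{\Pi}_{k_n}-\Pi_{k_n}$ by contour integrals of resolvents over circles $B_j$ (after restricting to the event $A_{k_n}$ on which $\widehat{C}_{k_n}$ may be replaced by $C_{k_n}$), and works with the symmetrized perturbation $\A_n(z)=(z\I-\D)^{-1/2}(\hD-\D)(z\I-\D)^{-1/2}$, whose second moments are $\O((j\log j)^2/n)$ uniformly on $B_j$ by the weighted bounds $\sum_{k\neq j}\lambda_k/|\lambda_j-\lambda_k|\leq Mj\log j$ and $\lambda_j/\delta_j\leq j+1$; these depend on the convexity of the spectrum, not on $\gamma_j$. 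Two further ingredients that your sketch omits but that are needed to close this step are (a) the inversion bound $\|(z\I-\D)^{1/2}(z\I-\hD)^{-1}(z\I-\D)^{1/2}\|_{\L}\leq 2$ on the high-probability events $E_j$, and (b) a truncation in $\|Y_{n+p-1}\|$ together with the smoothness Assumption \ref{ass: smoothness}(i) (via $\tbPhi=\Dd\bPhi$) to break the dependence between $\S_n$ and the evaluation point $Y_{n+p}$. Without replacing your $\gamma_j$-based aggregation by an argument of this type, the asserted negligibility of the perturbation term is not established.
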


\begin{remark}
    $G$ is said to be an $\H$-valued Gaussian random variable if its characteristic function satisfies 
    \begin{equation}
        \varphi_G(x) := \E \exp (\imath \langle x, G \rangle) = \exp(\imath \langle x, \E G \rangle - \langle \C_G (x), x \rangle / 2),
    \end{equation}
    where $\imath$ is the imaginary unit.
\end{remark}

Although $t_n$ is of the same order as $k_n$ under Assumption~\ref{ass: convexity of the function of eigenvalues}~\citep{cardot2007clt}, the normalization term $t_n$ is typically unknown in practice because it depends on the eigenvalues of the autocovariance operator $\D$. Hence, $t_n$ cannot be computed directly and must be approximated using estimated eigenvalues. To address this, we develop an adaptive version of Theorem~\ref{thm: CLT} in which each $\lambda_j$ is replaced by its empirical estimator $\hat{\lambda}_j$. The following corollary introduces a data-driven random normalization term that accounts for this substitution.

\begin{corollary}
    \label{cor: CLT}
    Theorem \ref{thm: CLT} still holds when $t_n$ is replaced by its empirical estimator $\widehat{t}_n$.
\end{corollary}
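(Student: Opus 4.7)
The plan is to derive the corollary by a direct appeal to Slutsky's theorem, using Theorem \ref{thm: CLT} as the starting point. Writing
\[
\sqrt{n/\widehat{t}_n} \;=\; \sqrt{n/t_n}\,\cdot\,\sqrt{t_n/\widehat{t}_n},
\]
once we establish that $\widehat{t}_n/t_n\overset{\P}{\rightarrow}1$, the three conclusions of Theorem \ref{thm: CLT} transfer verbatim to their empirical counterparts: the limiting Gaussian $G$ and the mode of convergence in $\H$ are preserved under multiplication of the left-hand side by a scalar sequence that tends to $1$ in probability. Thus the entire corollary reduces to the single scalar statement $\widehat{t}_n/t_n\overset{\P}{\rightarrow}1$.

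For this ratio, I would exploit the two-sided asymptotic equivalences $t_n\sim k_n$ and $\widehat{t}_n\sim k_n$ in probability. The deterministic side follows from \citet{cardot2007clt} under Assumption \ref{ass: convexity of the function of eigenvalues}(i) combined with the hypothesis $\alpha_n\lambda_{k_n}^{-2}=\smallo(1)$ of Theorem \ref{thm: CLT}: expanding
\[
\frac{\lambda_j^4}{(\lambda_j^2+\alpha_n)^2} \;=\; 1-\frac{2\alpha_n}{\lambda_j^2+\alpha_n}+\left(\frac{\alpha_n}{\lambda_j^2+\alpha_n}\right)^2
\]
and summing over $j\leq k_n$ gives $t_n=k_n(1+\smallo(1))$. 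For the empirical side, standard perturbation theory (Weyl's inequality) together with the rate $\|\hD-\D\|_{\L}=\O_{\P}(n^{-1/2})$, implied by Assumption \ref{ass: moment condition} and the geometrically decaying dependence from Assumption \ref{ass: operator condition}, yields $\max_{j\leq k_n}|\widehat{\lambda}_j-\lambda_j|=\O_{\P}(n^{-1/2})$. Under the bandwidth condition $n^{-1/2}k_n^{5/2}(\log k_n)^2\rightarrow 0$ this additive perturbation is dominated by $\lambda_{k_n}$, and the identical expansion applied to $\widehat{\lambda}_j$ then delivers $\widehat{t}_n=k_n(1+\smallo_{\P}(1))$. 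Dividing yields the desired ratio limit.

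The main obstacle I anticipate is the uniform multiplicative control of $\widehat{\lambda}_j/\lambda_j$ across all $j\leq k_n$, particularly at the tail index $j=k_n$ where $\lambda_{k_n}$ may itself be small. The additive bound from Weyl's inequality does not automatically translate into a usable relative bound, and one must simultaneously certify that (i) $\widehat{\lambda}_{k_n}$ remains bounded below by a constant multiple of $\lambda_{k_n}$ (hence, by $\sqrt{\alpha_n}$) with probability tending to one, and (ii) $\alpha_n/\widehat{\lambda}_j^2$ vanishes uniformly in $j\leq k_n$. This requires carefully combining the three conditions $\alpha_n\lambda_{k_n}^{-2}=\smallo(1)$, $\lambda_{k_n}^{-2}\alpha_n\sqrt{n}=\O(1)$, and $n^{-1/2}k_n^{5/2}(\log k_n)^2\rightarrow 0$ of Theorem \ref{thm: CLT}. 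Once this uniform control is in place, the two-sided equivalences above yield $\widehat{t}_n/t_n\overset{\P}{\rightarrow}1$, and the corollary follows from Slutsky's theorem.
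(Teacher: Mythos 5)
Your overall strategy is exactly the paper's: reduce the corollary to $\widehat{t}_n/t_n\overset{\P}{\rightarrow}1$ via Slutsky, and establish that ratio by showing every summand $\lambda_j^2/(\lambda_j^2+\alpha_n)$ and $\widehat{\lambda}_j^2/(\widehat{\lambda}_j^2+\alpha_n)$ is uniformly within $\O_{\P}(n^{-1/2})$ of $1$, using $\alpha_n\sqrt{n}/\lambda_{k_n}^2=\O(1)$ (the paper bounds $|\widehat{t}_n-t_n|$ by triangle inequality through $1$ and lower-bounds $t_n\gtrsim k_n$, which is the same computation as your two-sided equivalence $t_n,\widehat{t}_n\sim k_n$).

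The one place where you have a real gap is the step you yourself flag: the lower bound $\widehat{\lambda}_{k_n}\gtrsim_{\P}\lambda_{k_n}$. Your proposed route — Weyl's inequality giving an additive $\O_{\P}(n^{-1/2})$ perturbation, "dominated by $\lambda_{k_n}$ under the bandwidth condition $n^{-1/2}k_n^{5/2}(\log k_n)^2\rightarrow 0$" — does not work as stated: that condition constrains only $k_n$ as a function of $n$, and says nothing about how fast $\lambda_j$ decays, so it cannot by itself guarantee $n^{-1/2}=\O(\lambda_{k_n})$ (e.g.\ exponentially decaying eigenvalues with $k_n\asymp\log n$ satisfy the bandwidth condition while $\lambda_{k_n}$ can be far smaller than $n^{-1/2}$). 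The paper closes this step not through the bandwidth condition but by invoking its eigenvalue-comparison result (Lemma \ref{lem: eigenvalues}, Corollary 3.1 of \citet{kuhnert2020functional}), whose hypothesis $a_n/\lambda_{k_n}\leq M$ directly delivers the two-sided relation $\widehat{\lambda}_{k_n}=\O_{\P}(\lambda_{k_n})$ and $\lambda_{k_n}=\O_{\P}(\widehat{\lambda}_{k_n})$; with that in hand, $\alpha_n/\widehat{\lambda}_{k_n}^2=\O_{\P}(\alpha_n/\lambda_{k_n}^2)=\O_{\P}(n^{-1/2})$ and your argument goes through. So: cite (or reprove) that comparison lemma rather than trying to derive the relative eigenvalue control from the bandwidth condition.
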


Specializing the Hilbert space $\H$ to the function space $L^2([0,1])$, we derive the specific form of the asymptotic normality for the predictor in terms of the weak convergence of the finite-dimensional distributions, and subsequently construct the confidence interval based on this result. For $X(s), s \in [0,1]$ belonging to $L^2([0,1])$, let $S = (s_1, \ldots, s_g)^\T \in [0,1]^g$ with $0 < g < \infty$ and denote the $g$-dimensional vector $X(S) = (X(s_1), \ldots, X(s_g))^\T$.
\begin{corollary}
    Under the conditions of Theorem \ref{thm: CLT}(iii), if $\H=L^2([0,1])$ , then
    \begin{equation}
        \sqrt{\frac{n}{t_n}} \left(\left(\hrho Y_{n+p} - \brho Y_{n+p}\right)(S)\right) \overset{d}{\rightarrow} N(\mathbf{0}, \Sigma_{\boldsymbol{\varepsilon}}),
    \end{equation}
    where $\Sigma_{\boldsymbol{\varepsilon}}$ is an $g \times g$ matrix with the $(i, j)$-entry $(\Sigma_{\boldsymbol{\varepsilon}})_{ij} = \E[\varepsilon_0(s_i)\varepsilon_0(s_j)]$. Further, for each $s_i \in S, i = 1,\ldots,g$, the $(1-\alpha)$ point-wise confidence interval for $(\brho Y_{n+p})(S)$ is given by
    \begin{equation}
        \label{eq: CI}
        \left[ (\hrho Y_{n+p})(s_i) \pm z_{1-\alpha/2} \sqrt{\frac{\widehat{t}_n}{n} (\widehat{\Sigma}_{\boldsymbol{\varepsilon}})_{ii}} \right],
    \end{equation}
    where $\widehat{\Sigma}_{\boldsymbol{\varepsilon}}$ is a consistent estimator of $\Sigma_{\boldsymbol{\varepsilon}}$, and $z_{1-\alpha/2}$ denotes the $(1-\alpha)$-th quantile of the standard normal distribution.
\end{corollary}

\subsection{Mean Squared Prediction Error}
\label{subsec: MSE}

To further characterize the predictive performance of the proposed estimator, we examine the mean squared prediction error (MSPE). Since the asymptotic normality results concern the prediction of future observations, the MSPE provides a natural measure of estimation accuracy. Although it exhibits a bias–variance decomposition similar to finite-dimensional settings, the sources of bias here arise from regularization and truncation, reflecting the infinite-dimensional structure of functional data.

\begin{theorem}
    \label{thm: MSE}
    If $n^{-1/2}k_n^{5/2} (\log k_n)^2 \rightarrow 0$, then the mean squared prediction error (MSPE) using the Yule-Walker estimator $\hrho$ can be decomposed as
    \begin{equation}
        \begin{aligned}
            &\E\|\hrho Y_{n+p} - \brho Y_{n+p}\|^2\\
            =& \O \Bigg(\underbrace{\sum_{j=1}^{k_n} \frac{\alpha_n^2 \lambda_j}{(\lambda_j^2+\alpha_n)^2} \|\brho (\bnu_j)\|^2}_{B_1: \text{regularization bias}} + \underbrace{\sum_{j=k_n+1}^{\infty} \lambda_j \|\brho (\bnu_j)\|^2}_{B_2: \text{truncation bias}}\Bigg) + \O \Bigg( \underbrace{\sum_{j=1}^{k_n}\left(\frac{\lambda_j^2}{\lambda_j^2 + \alpha_n}\right)^2 \frac{\sigma_{\varepsilon}^2}{n}}_{V: \text{estimation variance}} \Bigg).
        \end{aligned}
    \end{equation}
\end{theorem}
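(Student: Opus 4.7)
The plan is to decompose the prediction error into three pieces whose expected squared norms recover, respectively, the pre-truncation bias $B_1$, the post-truncation bias $B_2$, and the variance $V$. Introduce the deterministic oracle
\begin{equation*}
    \brho_n := \cE\Dd = \brho\D\Dd = \brho\sum_{j=1}^{k_n}\frac{\lambda_j^2}{\lambda_j^2+\alpha_n}\bnu_j\otimes\bnu_j,
\end{equation*}
obtained from the Yule-Walker identity $\cE=\brho\D$ in (\ref{eq: Y-W-1}) together with the spectral decomposition (\ref{eq: spectral decomposition}). Then write
\begin{equation*}
    \hrho(Y_{n+p})-\brho(Y_{n+p}) = [\hrho-\brho_n](Y_{n+p}) + [\brho_n-\brho\Pi_{k_n}](Y_{n+p}) + \brho(\Pi_{k_n}-\I_{\H^p})(Y_{n+p}),
\end{equation*}
and apply $(x+y+z)^2\leq 3(x^2+y^2+z^2)$ to reduce the theorem to three independent second-moment bounds on these three summands.

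The two bias terms are deterministic operators acting on $Y_{n+p}$, so they are handled by direct second-moment calculation against the Karhunen-Lo\`eve expansion (\ref{eq: KL}). For the post-truncation bias, $(\Pi_{k_n}-\I_{\H^p})(Y_{n+p})=-\sum_{j>k_n}\eta_j^{(n+p)}\sqrt{\lambda_j}\bnu_j$; applying $\brho$, squaring the norm, and using the orthonormality moments $\E[\eta_j^{(n+p)}\eta_k^{(n+p)}]=\delta_{jk}$ produces exactly $\sum_{j>k_n}\lambda_j\|\brho(\bnu_j)\|^2=B_2$. For the pre-truncation bias, the algebraic identity $\brho_n-\brho\Pi_{k_n} = -\brho\sum_{j\leq k_n}\frac{\alpha_n}{\lambda_j^2+\alpha_n}\bnu_j\otimes\bnu_j$ combined with the identical K-L second-moment calculation yields $\sum_{j\leq k_n}\frac{\alpha_n^2\lambda_j}{(\lambda_j^2+\alpha_n)^2}\|\brho(\bnu_j)\|^2=B_1$. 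Both bounds are exact, not merely $\O(\cdot)$.

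The heart of the argument, and the main obstacle, is the variance bound $\E\|[\hrho-\brho_n](Y_{n+p})\|^2=\O(V)$. I would split
\begin{equation*}
    \hrho-\brho_n = (\hE-\cE)\Dd + \cE(\hD^\dagger-\Dd) + (\hE-\cE)(\hD^\dagger-\Dd),
\end{equation*}
apply each factor to $Y_{n+p}$, and bound the resulting squared norms. The leading contribution comes from $(\hE-\cE)\Dd(Y_{n+p})$: substituting $X_t=\brho(Y_{t-1})+\varepsilon_t$ into $\hE$ isolates an innovation-driven term $\tfrac{1}{n-p}\sum_t\langle Y_{t-1},\Dd(Y_{n+p})\rangle\varepsilon_t$, whose conditional second moment evaluates, via iterated conditioning on the Markov representation (\ref{eq: ARHp}), to $\sigma_{\boldsymbol{\varepsilon}}^2\,\E\langle\D\Dd(Y_{n+p}),\Dd(Y_{n+p})\rangle/n$; the spectral calculation then collapses $\sum_j\lambda_j^4/(\lambda_j^2+\alpha_n)^2$ to $t_n$, giving exactly $V$. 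The two remaining contributions involve the empirical eigenperturbation $\hD^\dagger-\Dd$ and must be controlled through spectral-projector perturbation bounds that exploit the eigen-gap convexity in Assumption \ref{ass: convexity of the function of eigenvalues}(i), the fourth-moment K-L control of Assumption \ref{ass: KL}, and the smoothness of Assumption \ref{ass: smoothness}. The growth restriction $n^{-1/2}k_n^{5/2}(\log k_n)^2\to 0$ is precisely what forces these perturbation and bilinear cross contributions to be of lower order than $V$ and absorbable into the $\O(V)$ bound. The technical difficulty lies in carrying the perturbation estimates in $L^2$ rather than merely in probability, since the inverse-type factor $1/\lambda_{k_n}$ appearing in $\hD^\dagger$ amplifies eigenvector fluctuations and any weak-moment control must be upgraded to uniform integrability before the spectral variance identity can produce $V$ cleanly.
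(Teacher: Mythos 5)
Your decomposition is genuinely different from the paper's, and for the two bias terms it is cleaner. The paper starts from its display (\ref{eq: decomposition of CLT}), splitting around the \emph{empirical} spectral objects, $\hrho-\brho = \brho(\hD\hD^{\dagger}-\widehat{\Pi}_{k_n}) + \S_n(\hD^{\dagger}-\Dd) + \S_n\Dd + \brho(\widehat{\Pi}_{k_n}-\Pi_{k_n}) + \brho(\Pi_{k_n}-\I_{\H^p})$, so its pre-truncation bias term carries $\widehat{\lambda}_j$ and $\hnu_j$ and needs an extra (somewhat glossed) passage to the population eigenvalues before it reads as $B_1$. Your oracle $\brho_n=\brho\D\Dd$ turns $B_1$ and $B_2$ into exact population identities via the K-L expansion, which is a genuine gain in transparency. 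The variance piece is the same in both arguments: after substituting $X_t=\brho(Y_{t-1})+\varepsilon_t$, the innovation-driven term is exactly the paper's $\S_n\Dd(Y_{n+p})$, and your remark about iterated conditioning through the Markov representation is precisely what the paper implements with its martingale-difference array $Z^{(1)}_{t,n}$ and the three-way split of $Y_{n+p}$ (needed because $Y_{n+p}$ is \emph{not} independent of the sample).

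The one genuine gap is a term you leave unnamed inside $(\hE-\cE)\Dd(Y_{n+p})$. Since $\hE=\brho\hD+\S_n$ and $\cE=\brho\D$, you have $(\hE-\cE)\Dd=\brho(\hD-\D)\Dd+\S_n\Dd$; you account for $\S_n\Dd$ but never for $\brho(\hD-\D)\Dd(Y_{n+p})$. This term cannot be dismissed by the crude bound $\|\hD-\D\|_{\L}\,\|\Dd\|_{\L}=\O_{\P}\bigl(n^{-1/2}\lambda_{k_n}^{-1}\bigr)$, because $1/(n\lambda_{k_n}^2)$ generally dominates $k_n/n$ under polynomial or exponential eigenvalue decay. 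One must instead use the coordinatewise bound $\E\langle(\hD-\D)(\bnu_j),\bnu_l\rangle^2=\O(\lambda_j\lambda_l/n)$ from Lemma \ref{lem: bound for diff and S_n}, together with a decoupling device for the dependence between $Y_{n+p}$ and $\hD$ of the kind encoded in conditions (\ref{eq: mixing}) and (\ref{eq: tensor bound}); summing $\lambda_j^4/(\lambda_j^2+\alpha_n)^2$ then returns $t_n/n=\O(V)$. This is exactly the role played in the paper by $\brho(\widehat{\Pi}_{k_n}-\Pi_{k_n})(Y_{n+p})$ and Proposition \ref{prop: bias4}, whose hypotheses go beyond the rate condition stated in Theorem \ref{thm: MSE}; your sketch needs the analogous lemma for $\brho(\hD-\D)\Dd$. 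With that supplied, the remaining items in your plan---the $L^2$ control of $\cE(\hD^{\dagger}-\Dd)$ and of the bilinear remainder under $n^{-1/2}k_n^{5/2}(\log k_n)^2\to 0$---track the paper's Propositions \ref{prop: bias2} and \ref{prop: bias4}.
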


Theorem \ref{thm: MSE} reveals that the MSPE decomposes into three components: 
(i) Regularization bias ($B_1$), originating from Tikhonov regularization within the retained $k_n$-dimensional eigenspace; 
(ii) Truncation bias ($B_2$), induced by discarding eigenvalues beyond dimension $k_n$; 
(iii) Estimation variance ($V$), reflecting stochastic variability from finite-sample estimation.

Asymptotically, for a fixed truncation dimension $k_n$, increasing the regularization parameter $\alpha_n$ results in an increase in $B_1$, a constant $B_2$, and a decrease in $V$. When $\alpha_n = 0$, this corresponds to truncation without regularization, yielding $B_1 = 0$ and $V = n^{-1} k_n \sigma_{\varepsilon}^2$. As $\alpha_n \to \infty$, $B_1$ increases and converges to $\sum_{j=1}^{k_n} \lambda_j \|\brho (\bnu_j)\|^2$, and $V$ approaches zero. Similarly, fixing $\alpha_n$ and increasing $k_n$ leads to higher $B_1$ and $V$, but a reduction in $B_2$.

Thus, the choice of $\alpha_n$ and $k_n$ requires careful balancing between bias and variance. We recommend using cross-validation for selecting these parameters, as detailed in Section \ref{sec: real data}.

\section{Simulation Studies}
\label{sec: simulation}

In this section, we examine the finite sample performance of the proposed estimation method.
Given dimension $d$ and time lag $p$, we focus on the Hilbert space $\H=L^2([0,1], \R^d)$ and specify the kernels $R_j^{kl}(u,v)$ and $C_{\boldsymbol{\varepsilon}_r}(u,v)$ of the operators $\rho_j^{kl}$ and $\C_{\boldsymbol{\varepsilon}_r}$ for $j=1,\ldots,p$ and $k,l,r=1,\ldots,d$. By K-L expansion, $\varepsilon_{tr}(u)$ can be expressed as
\begin{equation}
    \label{eq: KL of eps}
    \varepsilon_{tr}(u) = \sum_{j=1}^{\infty} \eta_j^{(tr)} \sqrt{\xi_j^{(r)}} \boldsymbol{e}_j^{(r)}(u),
\end{equation}
where $(\xi_j^{(r)},\boldsymbol{e}_j^{(r)}(u)), j\in\bN^+$ are the eigenpairs of $C_{\boldsymbol{\varepsilon}_r}(u,v)$ and $\eta_j^{(tr)}\overset{i.i.d.}{\sim} N(0,1)$.

We generate $n$ samples $\boldsymbol{X}_1(u),\ldots,\boldsymbol{X}_n(u)$. To maintain the stationarity, we generate $3n$ samples in $L^2([0,1], \R^d)$ using equation (\ref{eq: special ARHp})-(\ref{eq: kernel operator}), initializing the first $p$ samples as zero and retaining only the last $n$ samples. Given $p$, we apply the proposed method to obtain the estimations $\widehat{R}_j^{kl}(u,v)$ and $\widehat{C}_{\boldsymbol{\varepsilon}_r}(u,v)$. The replications are 200 times.

We employ two measures to evaluate the estimations and prediction. The first is the integrated mean squared error (IMSE) for the kernel of each $\widehat{\rho}_j^{kl}$, defined as $\E[\iint(\widehat{R}_j^{kl}(u,v)-R_j^{kl}(u,v))^2dudv]^{1/2}$. The second measure is the prediction error (PE) for forecasting $m \in \bN^+$ upcoming samples, defined as $(dm)^{-1}\sum_{r=1}^{d}\sum_{v=1}^{m}\E \int(\widehat{x}_{n+p+v,r}^*(u)-x_{n+p+v,r}^*(u))^2du$, where $x^*(u)$ represents the true samples to be predicted. In practice, we set $m=10$ and each function is recorded at $g=101$ equidistant grid points on $[0,1]$. We conduct two examples inspired by \citet{rubin2020spectral}.

\noindent\textbf{Example 1.} Consider the case where $d=1,p=4$ with the autoregressive operators $\rho_1, \rho_2, \rho_3, \rho_4$ and the innovation covariance operator $\C_{\boldsymbol{\varepsilon}}$ defined as integral operators with the respective kernels
\begin{align*}
    & R_1(u,v) = 0.3\sin(u-v), \
    R_2(u,v) = 0.3\cos(u-v),\\
    & R_3(u,v) = 0.3\sin(2u), \
    R_4(u,v) = 0.3\cos(v),\\
    & C_{\boldsymbol{\varepsilon}}(u,v) = \min(u,v),
\end{align*}
for $u,v\in[0,1]$. Note that $C(u,v) = \min(u,v)$ is the covariance kernel of the standard Brownian motion on $[0,1]$, which admits the decomposition (\citealp{deheuvels2003karhunen})
\begin{equation*}
    C(u,v) = \sum_{j=1}^{\infty} \frac{1}{((j-0.5)\pi)^2} \sqrt{2}\sin\left((j-0.5)\pi u\right) \sqrt{2}\sin\left((j-0.5)\pi v\right).
\end{equation*}
For $n=100,200,500,1000$, we apply the proposed methods to generate data $X_1,\ldots, X_n$ and conduct parameter estimation and prediction.

Figure \ref{fig: ARH_4_line_plot} illustrates PE versus Tikhonov parameters $\alpha_n$ and $k_n$ ($n=100$, averaged over 200 replications). When $\alpha_n=0$, PE initially decreases and then increases as $k_n$ increases, reaching a minimum value of 0.031 at $k_n=3$. However, the cumulative contribution of the first three eigenvalues is only 70\%, which is insufficient to capture all the information. When $\alpha_n>0$, PE converges to a common constant as $\alpha_n$ increases regardless of $k_n$. For $k_n\leq3$ (low cumulative contribution), adding $\alpha_n$ fails to reduce PE. For $k_n=5$ (85\% contribution), PE first decreases then increases with $\alpha_n$. These behaviors align with Theorem \ref{thm: MSE}. The optimal parameters minimizing PE are $k_n=10$ and $\alpha_n=0.04$, yielding PE=0.028 with 94\% cumulative contribution. Consequently, in the simulation study, we determine the optimal values $\alpha_n$ and $k_n$ that minimize PE.

\begin{figure}[htbp]
    \centering
    \subfloat[]{ 
        \includegraphics[width=0.5\textwidth]{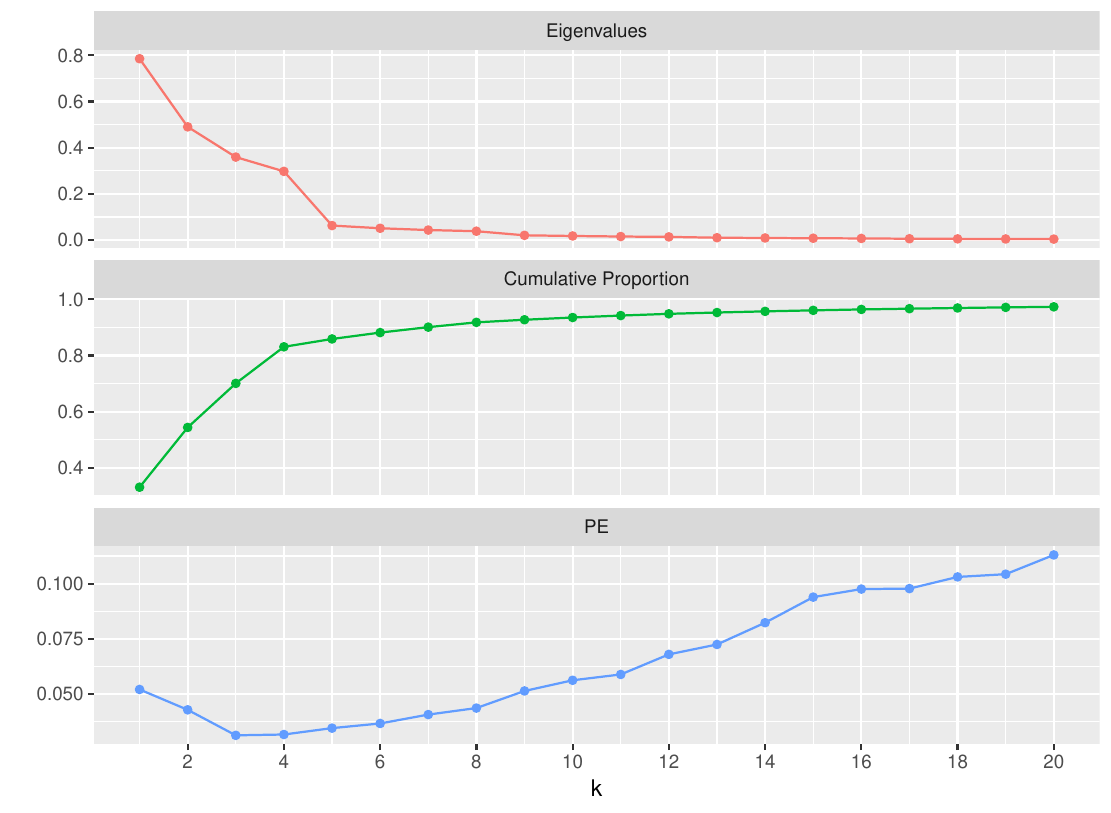}
    }
    \subfloat[]{
        \includegraphics[width=0.5\textwidth]{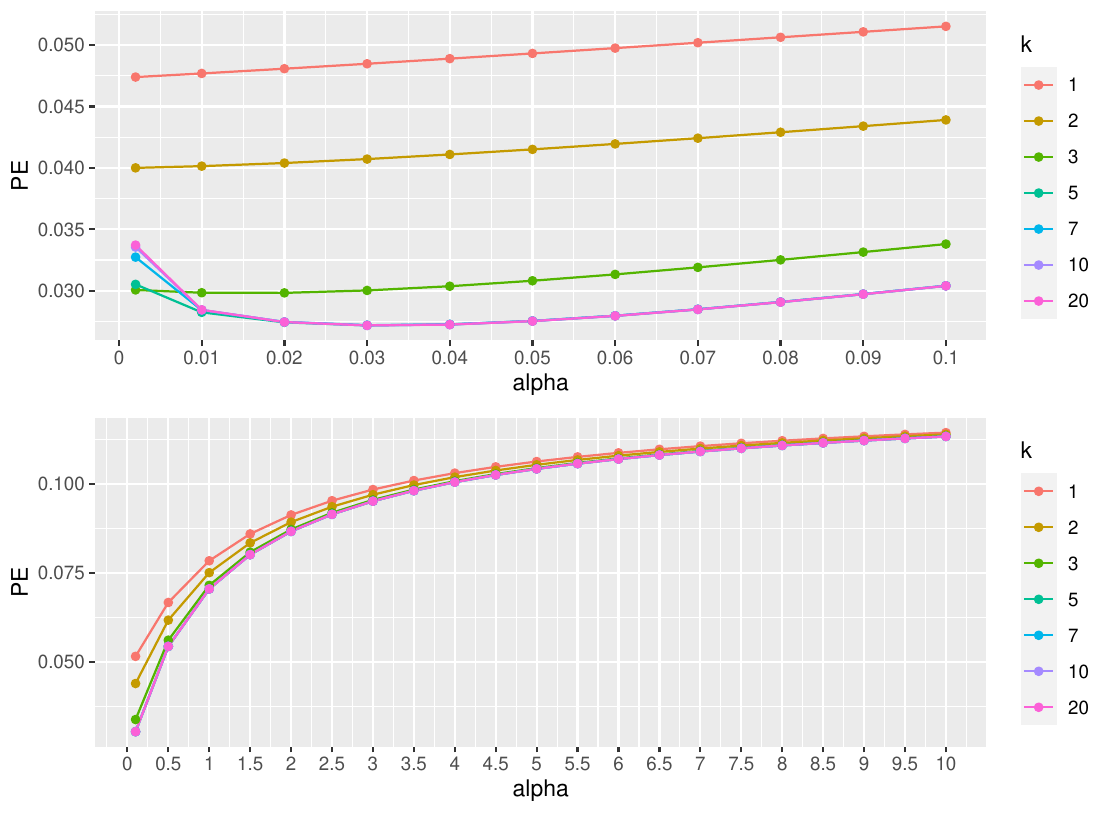}
    }
    \caption{(a) \textbf{Top} and \textbf{Middle}: The first 20 eigenvalues of $\hD$ and their cumulative proportion under $n=100$. \textbf{Bottom}: PE on $k_n$ when $\alpha_n=0, n=100$. (b) PE on $\alpha_n$ for different $k_n$ when $n=100$.}
    \label{fig: ARH_4_line_plot}
\end{figure}

Table \ref{tab: ARH_4} demonstrates the consistency of both the estimators and predictor. As $n$ increases, both PE and IMSE decrease, confirming the convergence of the proposed method. Figure \ref{fig: ARH_4_qq_plot} further illustrates the asymptotic normality of the predictor. To assess normality, we examine the QQ-plots of randomly projected prediction errors with $\boldsymbol{u}^\T (\widehat{\widetilde{\boldsymbol{X}}}_{n+p+1} - \widetilde{\boldsymbol{X}}_{n+p+1})$, where $\boldsymbol{u}$ is a 101-dimensional random vector drawn from a multivariate uniform distribution. The projected errors are standardized before constructing the QQ-plots. These findings offer empirical validation of Theorem \ref{thm: consistency} and Theorem \ref{thm: CLT}.

\begin{table}[htbp]
    \caption{PE and IMSE for kernels $\rho_1, \rho_2, \rho_3, \rho_4$. The standard errors are in parentheses.}
    \centering
    \begin{tabular}{ccccccc}
    \toprule
        & $n$  & PE & IMSE1 & IMSE2 & IMSE3 & IMSE4  \\ \midrule
        & 100  & 0.0280(0.058) & 0.1125(0.034) & 0.1462(0.049) & 0.1374(0.042) & 0.1560(0.035)  \\
        & 200  & 0.0112(0.015) & 0.0974(0.029) & 0.1183(0.027) & 0.1134(0.027) & 0.1341(0.026)  \\
        & 500  & 0.0058(0.009) & 0.0781(0.022) & 0.1014(0.020) & 0.0984(0.021) & 0.1165(0.022)  \\
        & 1000  & 0.0033(0.005) & 0.0683(0.021) & 0.0875(0.019) & 0.0864(0.019) & 0.1026(0.022)  \\ \bottomrule
    \end{tabular}
    \label{tab: ARH_4}
\end{table}

\begin{figure}[htbp]
    \centering
    \includegraphics[width=1\textwidth]{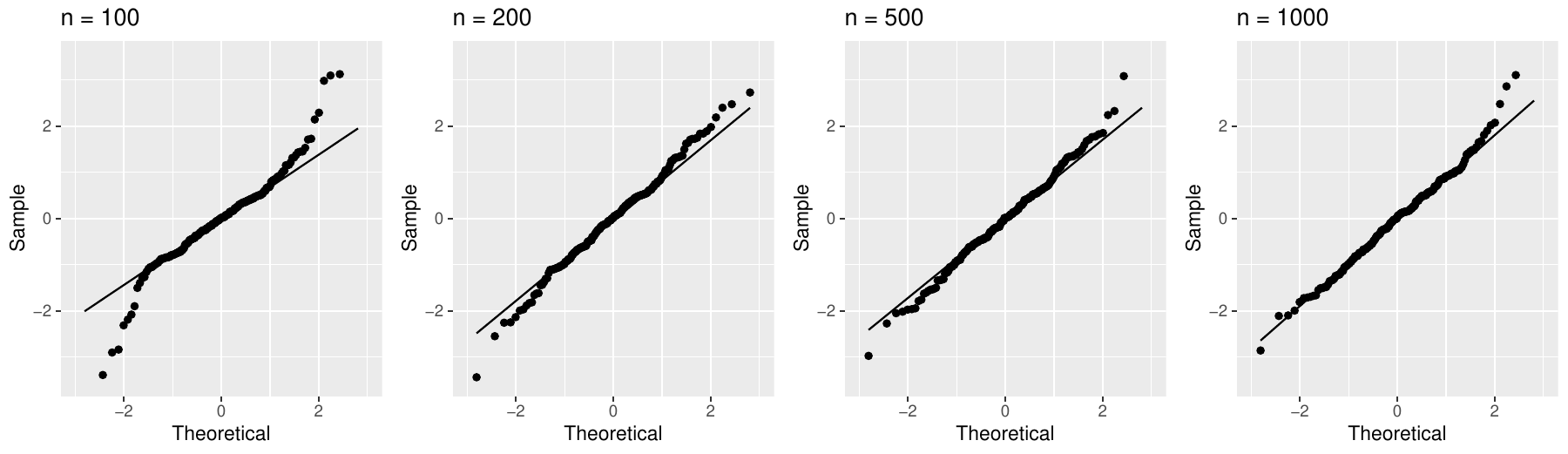}
    \caption{QQ-plots of $\boldsymbol{u}^\T (\widehat{\widetilde{\boldsymbol{X}}}_{n+p+1} - \widetilde{\boldsymbol{X}}_{n+p+1})$.}
    \label{fig: ARH_4_qq_plot}
\end{figure}

\begin{figure}[htbp]
    \centering
    \includegraphics[width=0.95\textwidth]{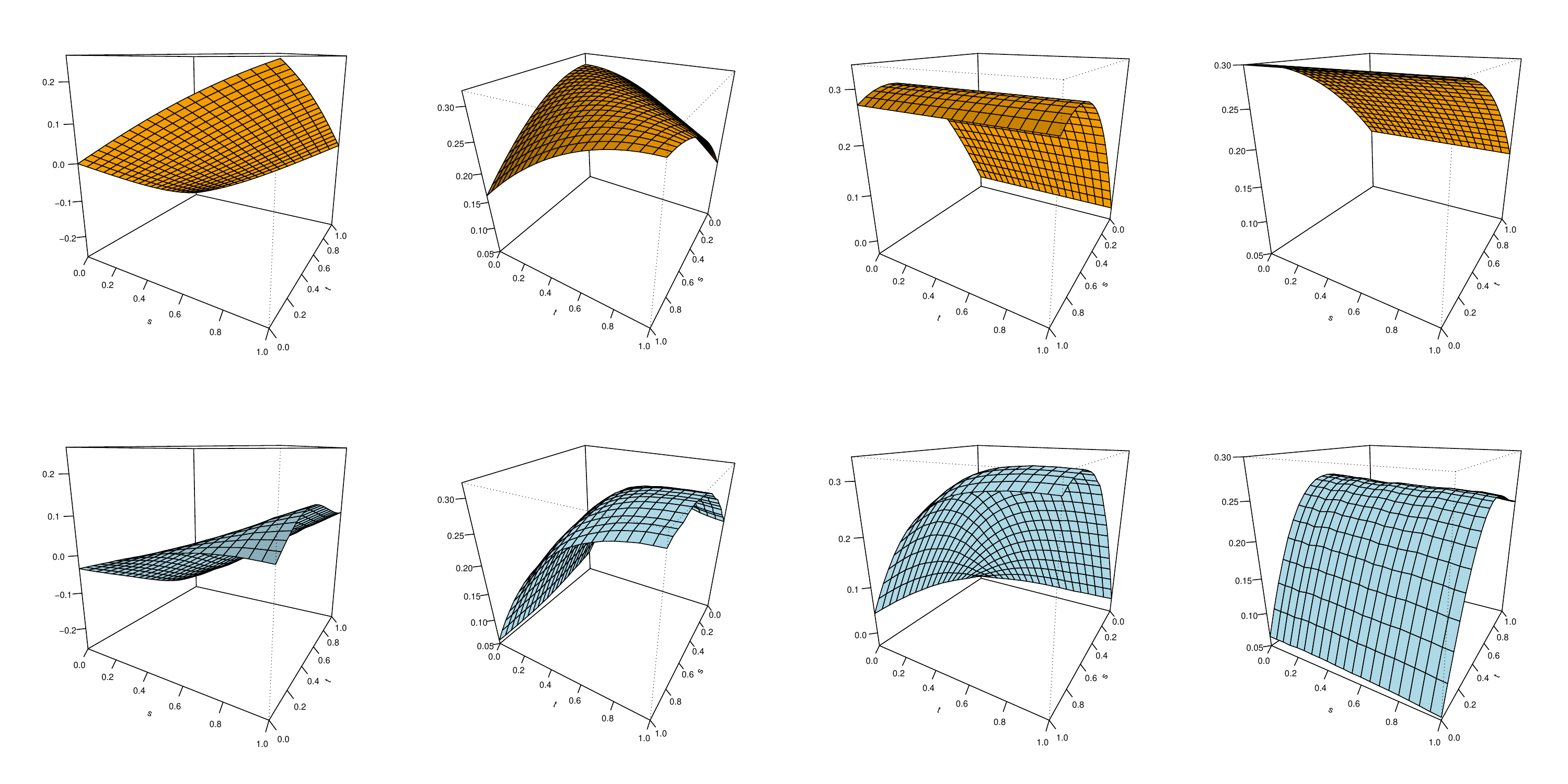}
    \caption{\textbf{Top}: The true kernels of the autoregressive operators $\rho_1, \rho_2, \rho_3, \rho_4$. \textbf{Bottom}: The estimated kernels of autoregressive operators $\widehat{\rho}_1, \widehat{\rho}_2, \widehat{\rho}_3, \widehat{\rho}_4$ when $n = 1000$.}
    \label{fig: ARH_4_rho_plot}
\end{figure}

\begin{figure}[htbp]
    \centering
    \includegraphics[width=0.85\textwidth]{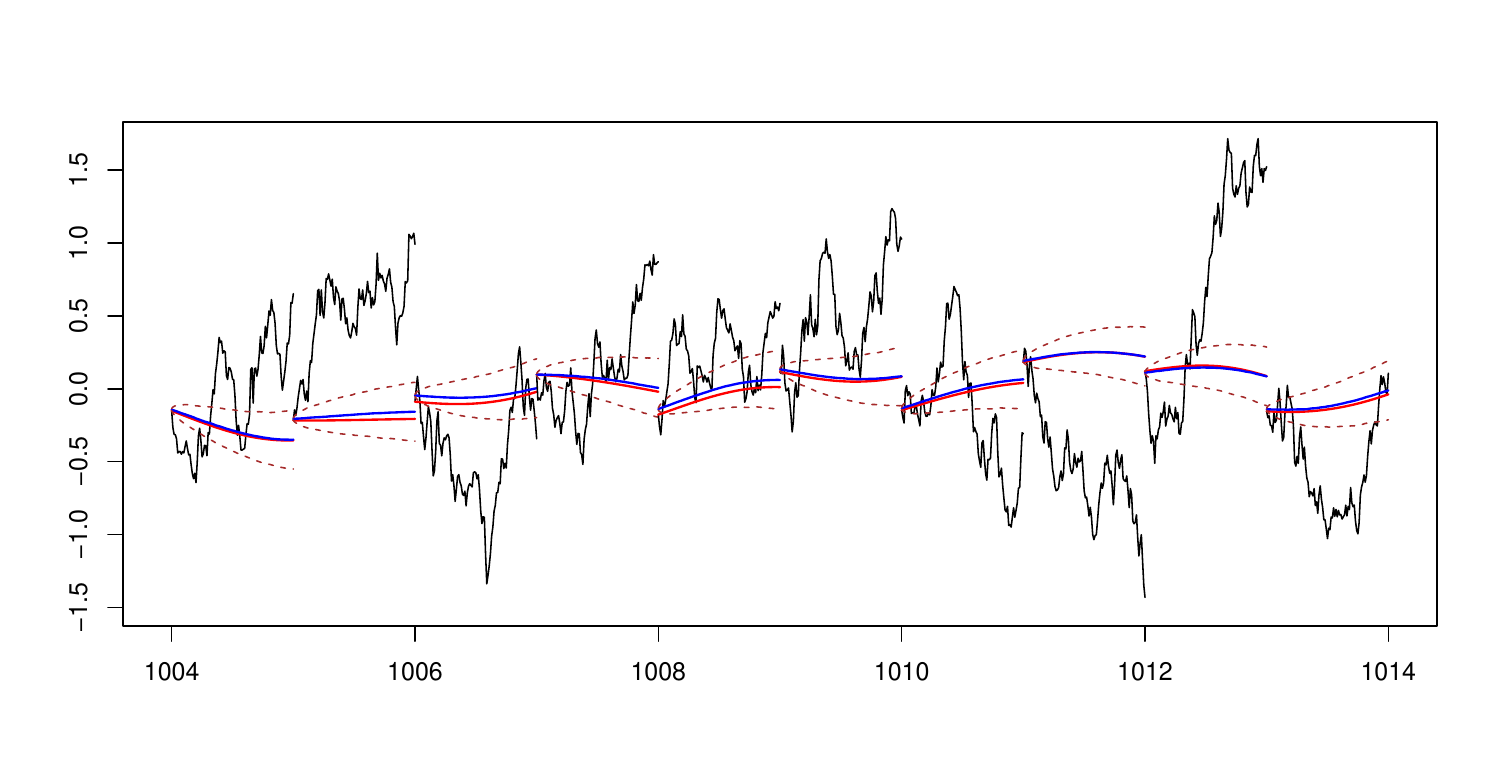}
    \caption{\footnotesize Prediction of $X_{n+p+v}$ given $X_{n+p+v-1}, \ldots, X_{n+v}$ for $v=1,2,\ldots,10$ when $n = 1000$. Within each interval $[t-1,t]$, the black solid line represents the generated data $X_t$, the red solid line denotes $\brho Y_{t-1}$ excluding white noise, the blue solid line corresponds to the prediction $\hrho Y_{t-1}$, and the brown dashed lines denote the 95\% confidence interval for $\brho Y_{t-1}$.}
    \label{fig: ARH_4_pred}
\end{figure}

Figure \ref{fig: ARH_4_rho_plot} presents the surfaces formed by the kernels of both the true and estimated autoregressive operators. Despite some deviation at the boundaries, the overall trend of the estimated surfaces closely aligns with their true counterparts. Figure \ref{fig: ARH_4_pred} displays the prediction results for the unobserved samples, demonstrating strong predictive performance. Notably, the underlying curves, excluding white noises, fall within the constructed confidence intervals, further supporting the validity of the proposed estimation and inference methods.

\noindent\textbf{Example 2.} Consider the case where $d=2, p=2$ with the autoregressive operators $\rho^{11}, \rho^{12}, \rho^{21}, \rho^{22}$ and the innovation covariance operators $\C_{\boldsymbol{\varepsilon}_{11}}, \C_{\boldsymbol{\varepsilon}_{12}}$ defined as integral operators with the respective kernels
\begin{align*}
    & R^{11}(u,v) = 0.3\exp\{(u^2+v^2)/2\}, \
    R^{12}(u,v) = 0.3(u+v),\\
    & R^{21}(u,v) = 0.9(u-v), \
    R^{22}(u,v) = 0.3\sin(2v),\\
    & C_{\boldsymbol{\varepsilon}_{11}}(u,v) =C_{\boldsymbol{\varepsilon}_{12}}(u,v) = \min(u,v) - uv
\end{align*}
for $u,v\in[0,1]$. Note that $C(u,v) = \min(u,v) - uv$ is the covariance kernel of the Brownian bridge on $[0,1]$, which admits the decomposition (\citealp{deheuvels2003karhunen})
\begin{equation*}
    C(u,v) = \sum_{j=1}^{\infty} \frac{1}{(j\pi)^2} \sqrt{2}\sin(j\pi u) \sqrt{2}\sin(j\pi v).
\end{equation*}
For $n=100,200,500,1000$, we employ the analogous process as Example 1 and obtain the following results: consistency results are tabulated in Table \ref{tab: ARH^2}, while other figures are presented in Appendix D. Unlike Example 1, Example 2 generates two curve sequences exhibiting a dependent structure. Similar to the analysis in Example 1, the results of Example 2 further confirm that our estimation method and theoretical findings remain valid for multivariate functional time series.

\begin{table}[htbp]
    \caption{PE and IMSE for kernels $\rho^{11}, \rho^{12}, \rho^{21}, \rho^{22}$. The standard errors are in parentheses.}
    \centering
    \begin{tabular}{ccccccc}
    \toprule
        & $n$ & PE & IMSE(1,1) & IMSE(1,2) & IMSE(2,1) & IMSE(2,2)  \\ \midrule
        & 100  & 0.0039(0.005) & 0.2114(0.043) & 0.1937(0.037) & 0.2706(0.035) & 0.1771(0.039)  \\
        & 200  & 0.0021(0.002) & 0.1785(0.029) & 0.1624(0.022) & 0.2517(0.026) & 0.1484(0.032)  \\
        & 500  & 0.0012(0.001) & 0.1532(0.024) & 0.1474(0.018) & 0.2264(0.020) & 0.1227(0.022)  \\
        & 1000  & 0.0008(0.001) & 0.1358(0.021) & 0.1288(0.015) & 0.2037(0.016) & 0.1126(0.020)  \\ \bottomrule
    \end{tabular}
    \label{tab: ARH^2}
\end{table}

\section{Real Data Analysis}
\label{sec: real data}

In this section, we illustrate the proposed method using a publicly available wearable sensor dataset from \url{https://ubicomp.eti.uni-siegen.de/home/datasets/ubicomp12/}; see \cite{BerlinEugen2012Dlaw} for details. The dataset consists of raw acceleration measurement in three dimensions recorded by a wrist-worn sensor worn by six participants, each performing one leisure activity—such as badminton, cycling, flamenco dancing, playing guitar, gym workouts, or Zumba aerobics for 30-90 minutes daily. The acceleration data, as illustrated in Figure \ref{fig: guitar_3}, is naturally considered as multivariate functional data. We apply the proposed method to estimate the autoregressive operators for each activity, aiming to uncover intrinsic movement patterns and provide new insights into activity pattern recognition.

Due to the highly irregular sampling intervals in the original data, we first apply linear spline interpolation for equally spaced observations, then cubic spline interpolation for smoothing. The data is segmented into 1-second intervals, each containing 101 points. Each segment can be viewed as a realization of a random element in the space $L^2([0,1], \R^3)$. For each activity, we use the first 90\% of the data as the training set and reserve the remaining 10\% for testing. Applying the methodology proposed in Section \ref{sec: example}, we estimate the autocovariance and autoregressive operators. As discussed in Section \ref{subsec: MSE}, the choice of parameters $\alpha_n$ and $k_n$ plays a critical role in estimation accuracy. We employ 5-fold cross-validation for selection, adopting the partitioning as illustrated in Figure \ref{fig: cv} to handle functional time series dependence.

We fit an FAR(5) model to the data. Figure \ref{fig: rho1} presents the estimated kernels of $\widehat{\brho}_1$ for six different leisure activities, revealing distinct autoregressive patterns in both magnitude and shape. To further illustrate the differences among activities, we extract the first two principal components of the vectorized $\widehat{\brho}_1$ and apply the Support Vector Machine (SVM) with a linear kernel to determine the separation hyperplane, as shown in Figure \ref{fig: PCA}. The results indicate effective clustering of different activities, with the separation hyperplane successfully partitioning the feature space.

Finally, using the estimated autoregressive operators, we predict the data in the testing set and construct a 95\% confidence interval. Figure D.4 displays the prediction results for the activity \enquote{cycling\_2}, showing that the model effectively captures the trend of the activity. These findings demonstrate that the proposed model and estimation method perform well on real-world wearable sensor data.

\begin{figure}[htbp]
    \centering
    \centering
        \begin{tabular}{cc}
            \includegraphics[width=0.8\textwidth]{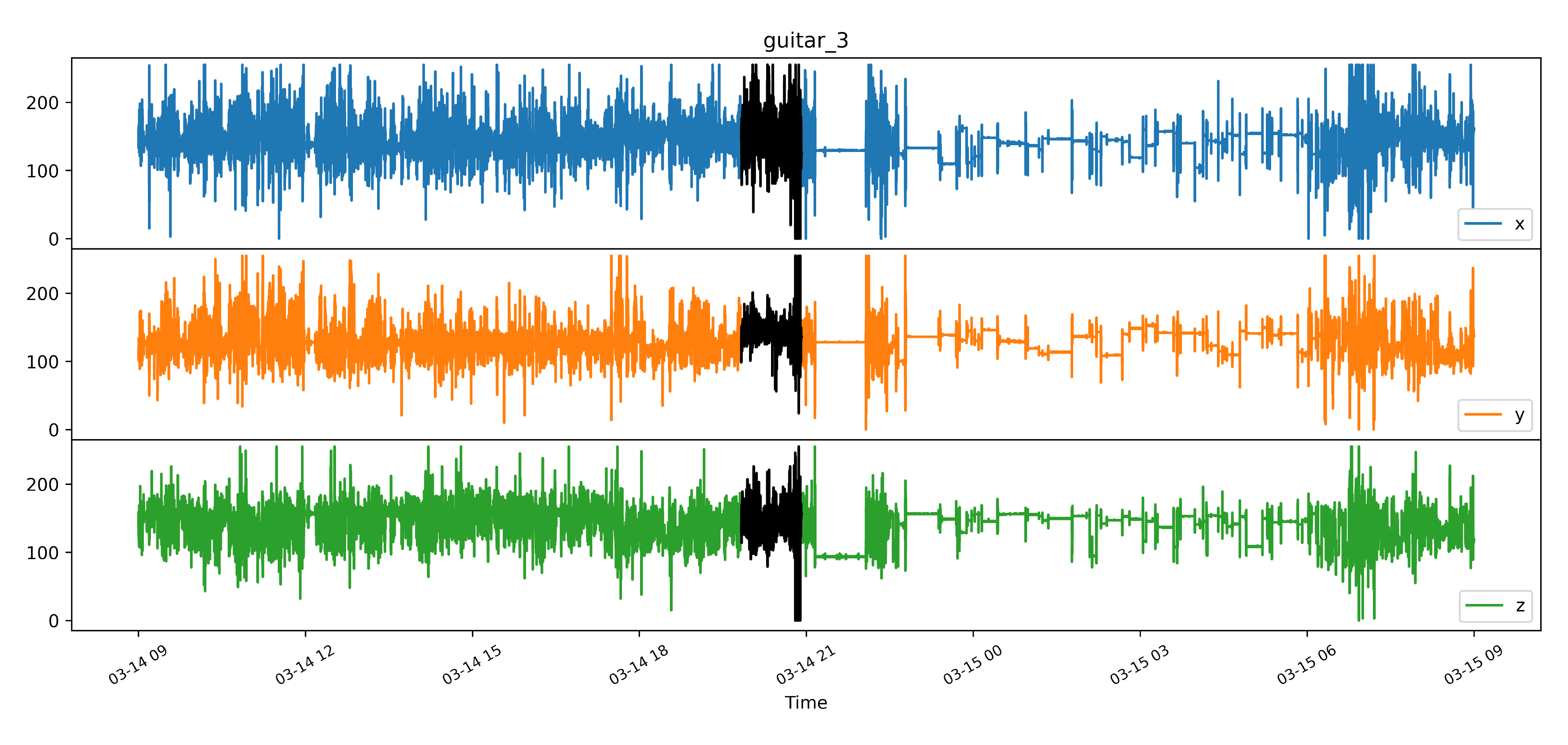}\\
            \hspace{2mm}{\includegraphics[width=0.8\textwidth]{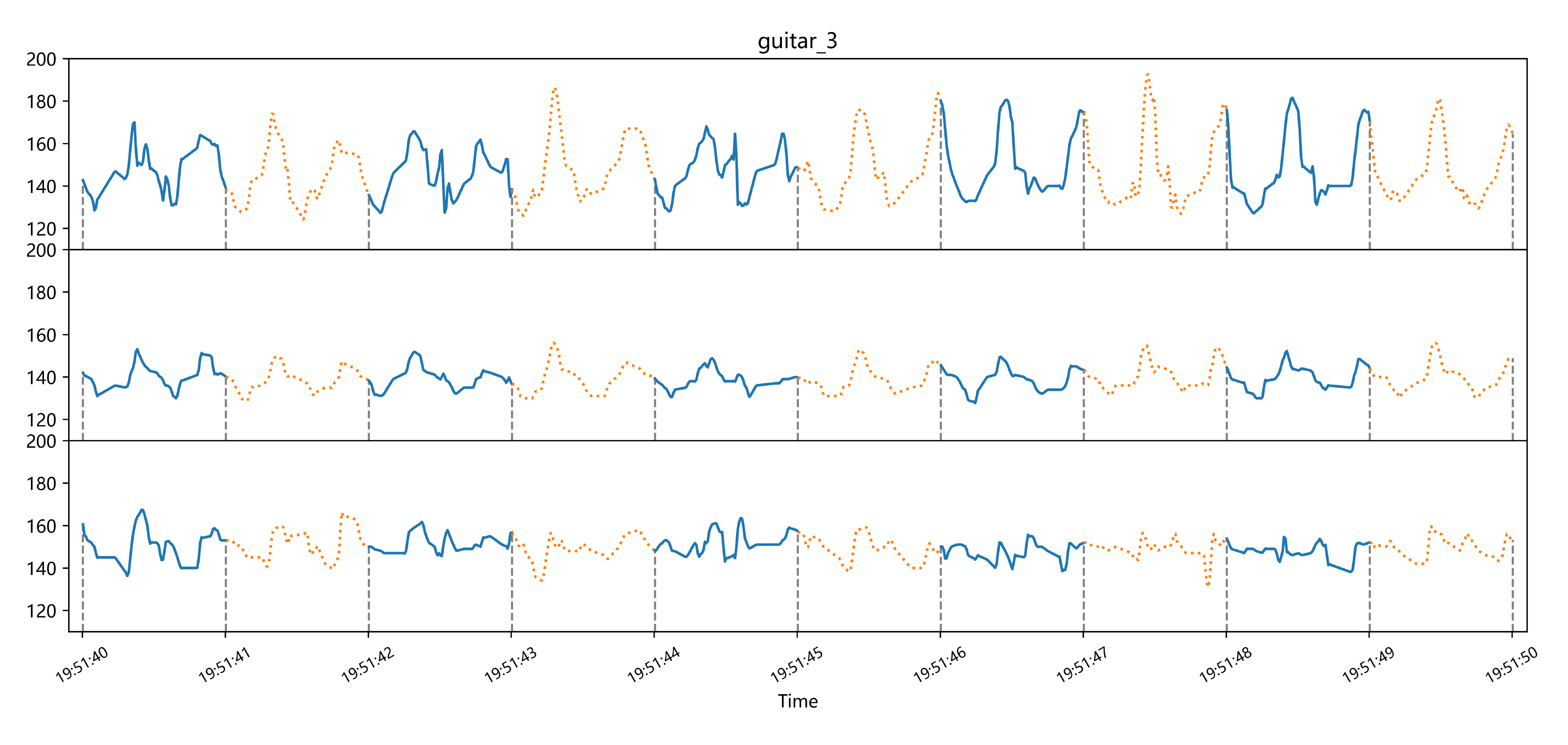}}
        \end{tabular}
    \caption{\small \textbf{Top}: The 24-hour, 3-dimensional acceleration data collected from the wrist-worn sensor of the participant who chose playing the guitar as his leisure activity. The black segments mark activity periods. \textbf{Bottom}: A zoomed-in view of the acceleration data for a selected 10-second period during the leisure activity, with the data divided into 1-second segments.}
    \label{fig: guitar_3}
\end{figure}

\begin{figure}
    \centering
    \includegraphics[height=18cm]{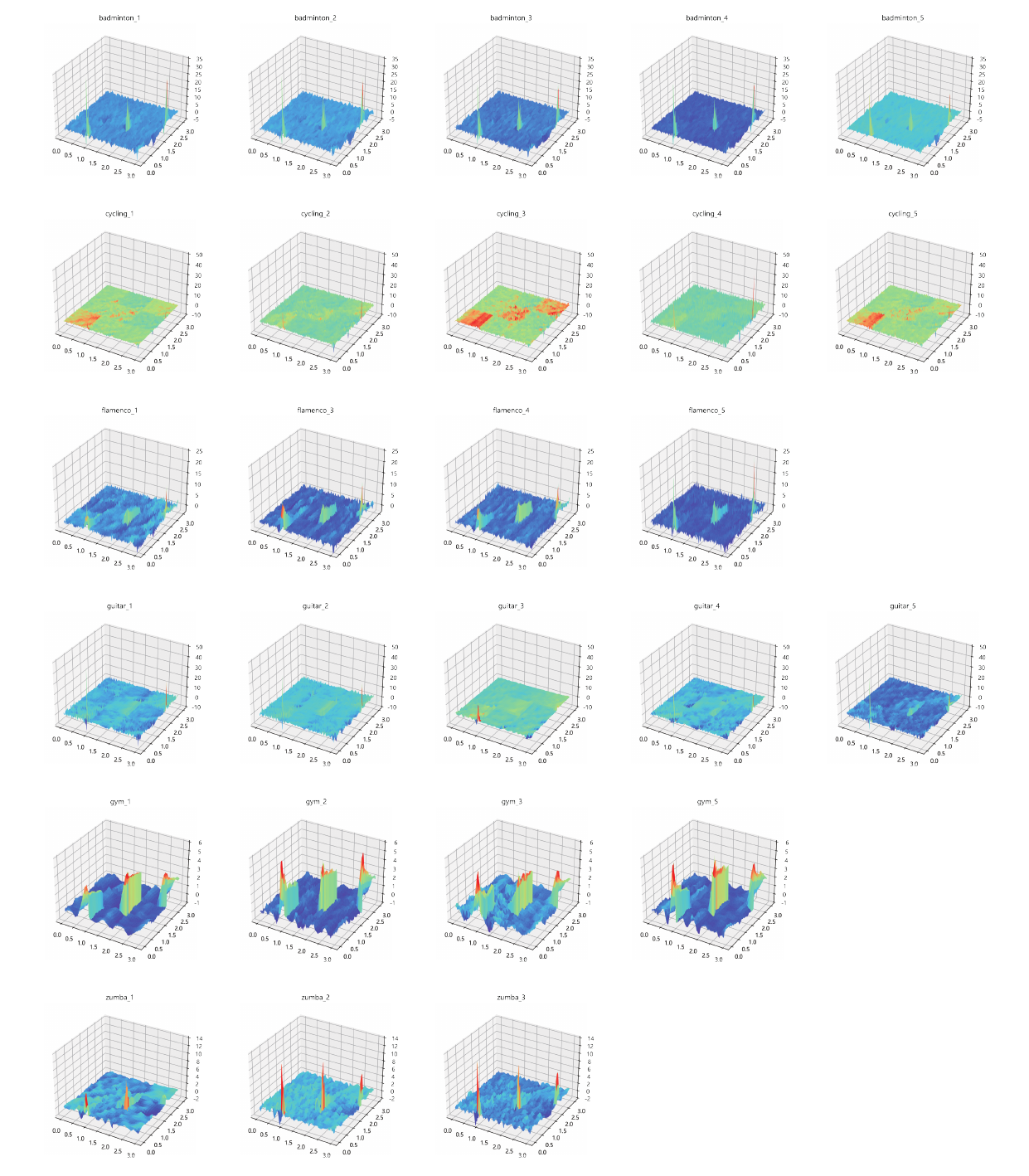}
    \caption{\small The estimated kernels of autoregressive operators $\widehat{\brho}_1$ for six different activities. The same row but different columns represent the estimations of the same activity on different days.}
    \label{fig: rho1}
\end{figure}

\begin{figure}
    \centering
    \includegraphics[width=0.8\linewidth]{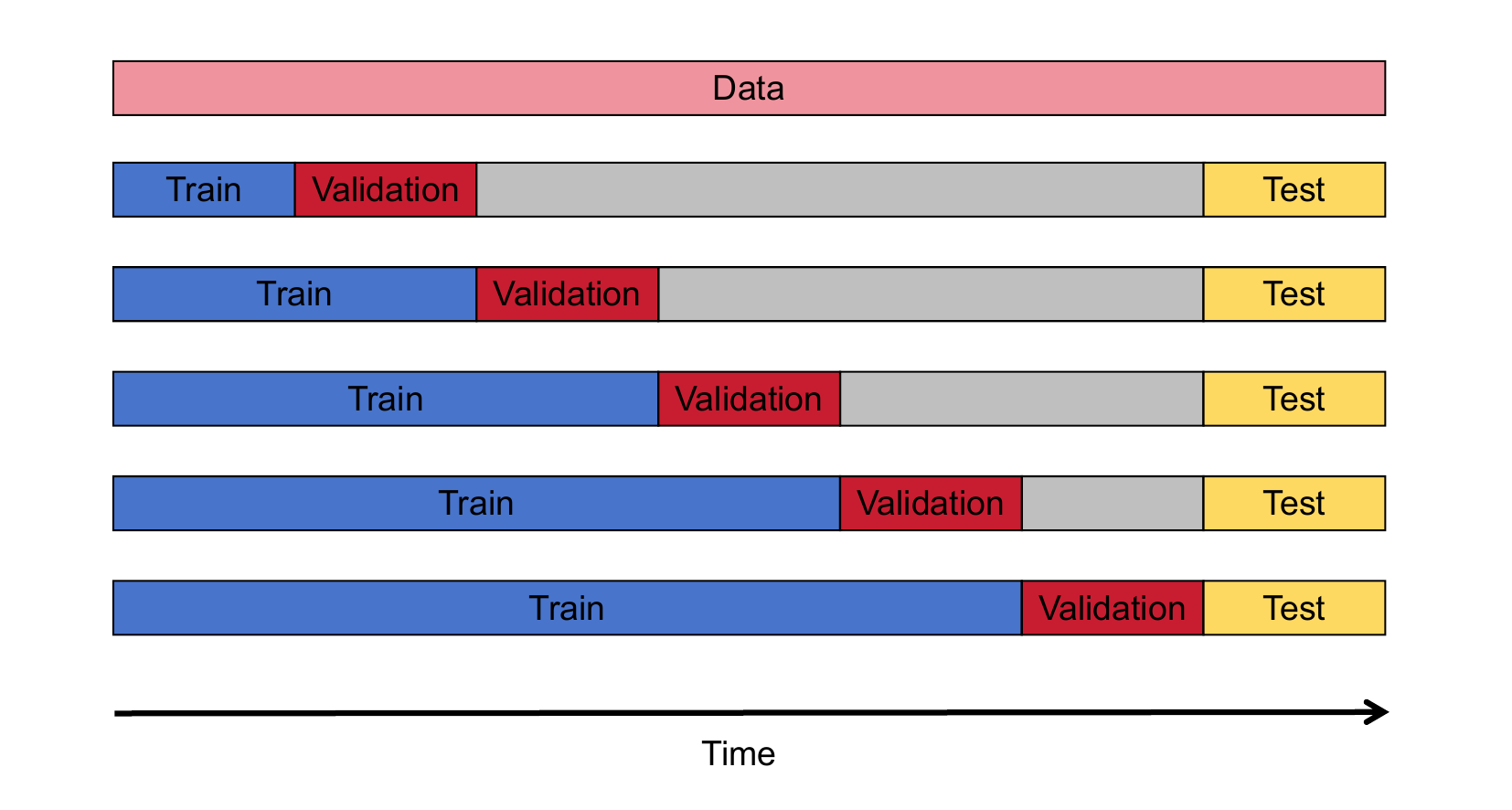}
    \caption{The 5-fold cross-validation diagram for the functional time series.}
    \label{fig: cv}
\end{figure}

\begin{figure}[htbp]
    \centering
        \begin{tabular}{ccc}
            \includegraphics[width=0.5\textwidth]{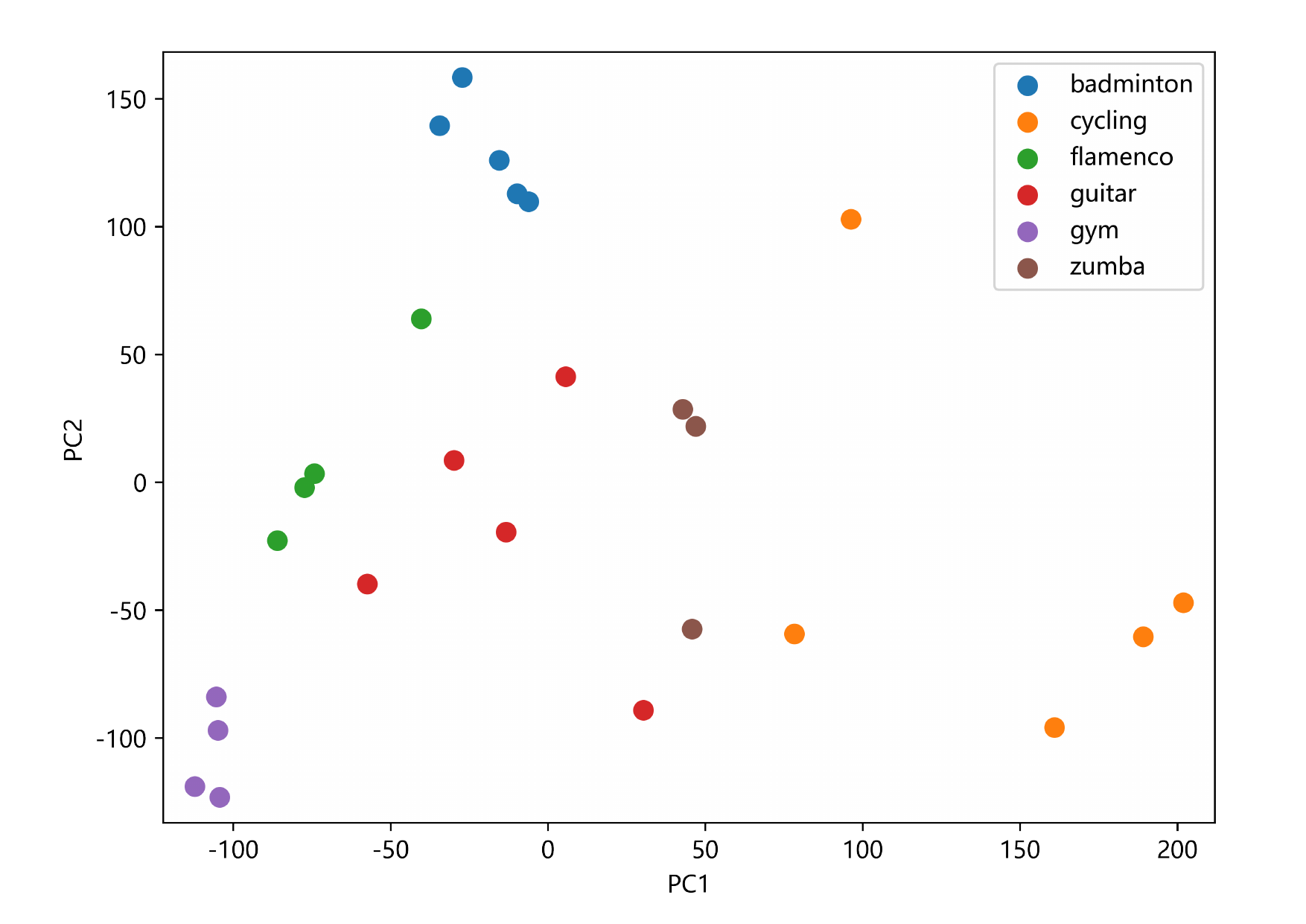}
            \includegraphics[width=0.5\textwidth]{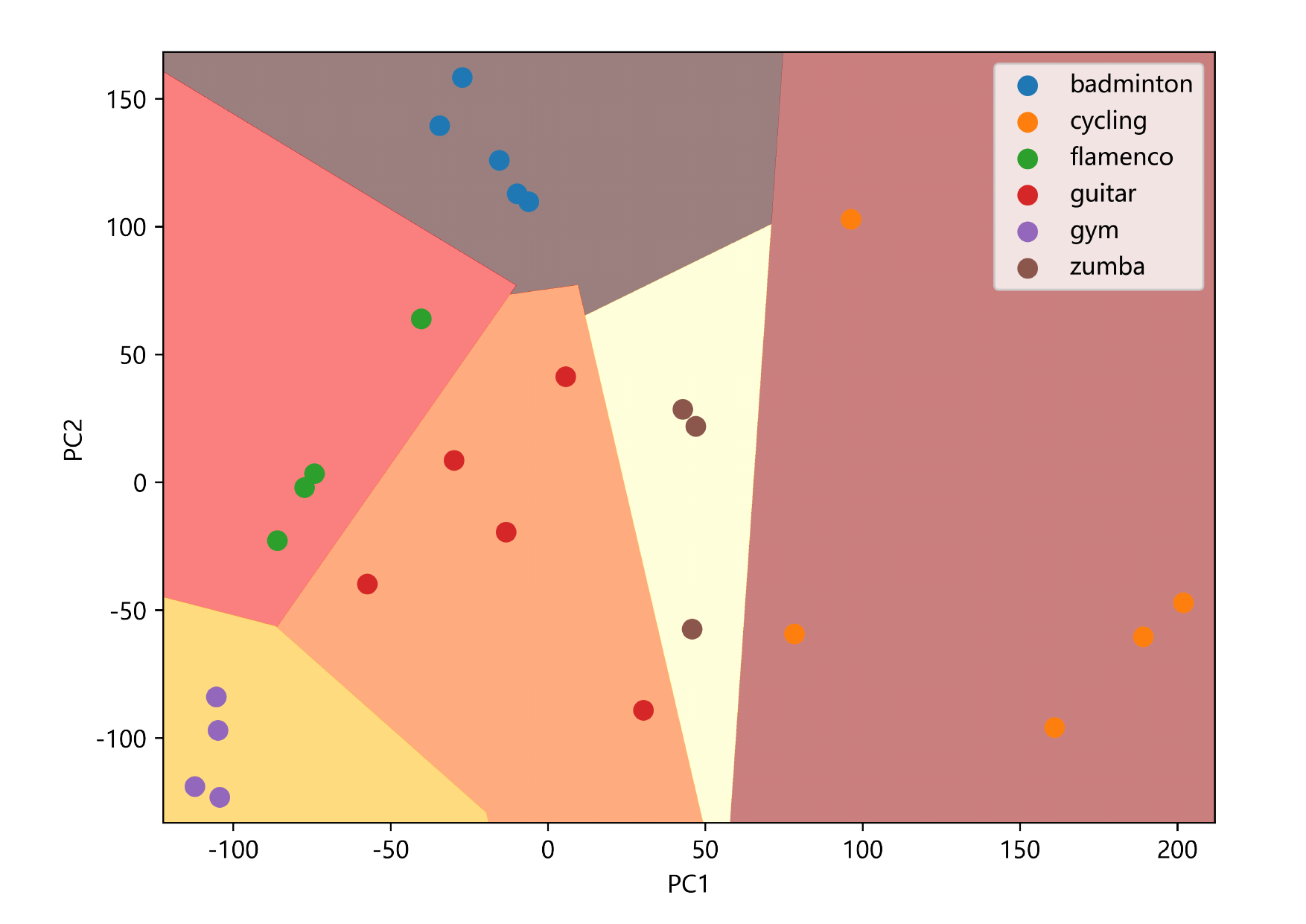}
        \end{tabular}
    \caption{\small \textbf{Left}: The scatter plot of the first and second principal components of vectorized $\widehat{\brho}_1$, where the points in various colors denote various leisure activities. \textbf{Right}: The segmentation hyperplanes obtained using SVM with the linear kernel.}
    \label{fig: PCA}
\end{figure}

\section{Conclusion}
\label{sec: conclusion}
This paper addresses the higher-order FAR($p$) model in functional time series, overcoming the fundamental challenges of infinite-dimensional inference through a novel regularization framework. By establishing the Yule-Walker equations, we develop regularized estimators that bridge the gap between finite and infinite-dimensional settings. Our theoretical contributions include proving the consistency of the estimators and establishing asymptotic normality for predictors with mean squared prediction error analysis. The concrete implementation in $L^2$ space enables practical applications, validated through simulations and a wearable sensor study that demonstrates activity recognition via estimated autoregressive operators. This paper does not address the estimation of the order $p$, and references to \cite{kokoszka2013determining} are suggested for further exploration.


\bibliography{bibliography}

\end{document}